\newtheorem{theorem}{Theorem}
\newtheorem{corollary}[theorem]{Corollary}
\newtheorem{lemma}[theorem]{Lemma}
\newtheorem{meta-lemma}[theorem]{Meta-Lemma}
\newtheorem{proposition}[theorem]{Proposition}
\newtheorem{remark}[theorem]{Remark}
\newlength{\blank}
\newenvironment{proof}{{\noindent\textbf{Proof.\ }}}{\hfill\qed}
\newenvironment{proofthm}[1]{{\noindent\textbf{Proof~{#1}.\ }}}{\hfill\qed}
\newcommand{\ket}[1]{|#1\rangle}
\newcommand{\bra}[1]{\langle#1|}
\mathchardef\ordinarycolon\mathcode`\:
\def\vcentcolon{\mathrel{\mathop\ordinarycolon}}
\newcommand{\nc}{\newcommand}
\nc{\rnc}{\renewcommand}
\nc{\beq}{\begin{equation}}
\nc{\eeq}{{\end{equation}}}
\nc{\beqa}{\begin{eqnarray}}
\nc{\eeqa}{\end{eqnarray}}
\nc{\lbar}[1]{\overline{#1}}
\nc{\ketbra}[2]{|#1\rangle\!\langle#2|}
\nc{\proj}[1]{| #1\rangle\!\langle #1 |}
\nc{\avg}[1]{\langle#1\rangle}
\nc{\Rank}{\operatorname{Rank}}
\nc{\smfrac}[2]{\mbox{$\frac{#1}{#2}$}}
\nc{\tr}{\operatorname{Tr}}
\nc{\ox}{\otimes}
\nc{\dg}{\dagger}
\nc{\dn}{\downarrow}
\nc{\cA}{\mathcal{A}}
\nc{\cB}{\mathcal{B}}
\nc{\cC}{\mathcal{C}}
\nc{\cD}{\mathcal{D}}
\nc{\cE}{\mathcal{E}}
\nc{\cF}{\mathcal{F}}
\nc{\cG}{\mathcal{G}}
\nc{\cH}{\mathcal{H}}
\nc{\cI}{\mathcal{I}}
\nc{\cJ}{\mathcal{J}}
\nc{\cK}{\mathcal{K}}
\nc{\cL}{\mathcal{L}}
\nc{\cM}{\mathcal{M}}
\nc{\cN}{\mathcal{N}}
\nc{\cO}{\mathcal{O}}
\nc{\cP}{\mathcal{P}}
\nc{\cR}{\mathcal{R}}
\nc{\cS}{\mathcal{S}}
\nc{\cT}{\mathcal{T}}
\nc{\cX}{\mathcal{X}}
\nc{\cZ}{\mathcal{Z}}
\nc{\csupp}{{\operatorname{csupp}}}
\nc{\qsupp}{{\operatorname{qsupp}}}
\nc{\var}{\operatorname{var}}
\nc{\rar}{\rightarrow}
\nc{\lrar}{\longrightarrow}
\nc{\polylog}{\operatorname{polylog}}
\nc{\id}{{\operatorname{id}}}
\nc{\RR}{{{\mathbb R}}}
\nc{\CC}{{{\mathbb C}}}
\nc{\FF}{{{\mathbb F}}}
\nc{\NN}{{{\mathbb N}}}
\nc{\ZZ}{{{\mathbb Z}}}
\nc{\PP}{{{\mathbb P}}}
\nc{\QQ}{{{\mathbb Q}}}
\nc{\UU}{{{\mathbb U}}}
\nc{\EE}{{{\mathbb E}}}
\nc{\qed}{{\hfill$\Box$}}
\def\>{\rangle}
\def\<{\langle}
\begin{document}

\title{Tight uniform continuity bounds for quantum entropies: \protect\\
       conditional entropy, relative entropy distance and energy constraints}

\author{Andreas Winter}
\email{andreas.winter@uab.cat}
\affiliation{ICREA \& F\'{\i}sica Te\`{o}rica: Informaci\'{o} i Fen\`{o}mens Qu\`{a}ntics, %
Universitat Aut\`{o}noma de Barcelona, ES-08193 Bellaterra (Barcelona), Spain}

\date{12 January 2016}

\begin{abstract}
  We present a bouquet of continuity bounds for quantum entropies, falling
  broadly into two classes:
  First, a tight analysis of the Alicki-Fannes continuity bounds
  for the conditional von Neumann entropy, reaching almost the best possible 
  form that depends only on the system dimension and the trace distance
  of the states. Almost the same proof can be used to derive similar
  continuity bounds for the relative entropy distance from a convex set 
  of states or positive operators. 
  As applications we give new proofs, with tighter bounds, of the asymptotic continuity of 
  the relative entropy of entanglement, $E_R$, and its regularization $E_R^\infty$,
  as well as of the entanglement of formation, $E_F$. Using a novel
  ``quantum coupling'' of density operators, which may be of independent
  interest, we extend the latter to an asymptotic continuity bound for the 
  regularized entanglement of formation, aka entanglement cost, $E_C=E_F^\infty$.

  Second, we derive analogous continuity bounds for the von Neumann entropy and
  conditional entropy in infinite dimensional systems under an energy
  constraint, most importantly systems of multiple quantum harmonic
  oscillators. While without an energy bound the entropy is discontinuous, it is
  well-known to be continuous on states of bounded energy. However, a 
  quantitative statement to that effect seems not to have been known. Here, under 
  some regularity assumptions on the Hamiltonian, we find
  that, quite intuitively, the Gibbs entropy at the given energy
  roughly takes the role of the Hilbert space dimension in the finite-dimensional 
  Fannes inequality.
\end{abstract}

%\pacs{03.65.Aa, 03.67.Mn}

\maketitle

\section{Introduction}
On finite dimensional systems, the von Neumann entropy 
$S(\rho) = -\tr\rho\log\rho$ is continuous,
but this becomes useful only once one has explicit continuity bounds,
most significantly the one due to Fannes~\cite{Fannes}, the sharpest
form of which is the following:
\begin{lemma}[{Audenaert~\cite{Audenaert:Fannes}, Petz~\cite{Petz:book}}]
  \label{lemma:fannes-audenaert}
  For states $\rho$ and $\sigma$ on a Hilbert space $A$ of dimension
  $d = |A| < \infty$, if $\frac12\|\rho-\sigma\|_1 \leq \epsilon \leq 1$, then
  \[
    |S(\rho)-S(\sigma)| \leq \begin{cases}
                               \epsilon \log(d-1) + h(\epsilon) & \text{ if } \epsilon \leq 1-\frac1d, \\
                               \log d                           & \text{ if } \epsilon > 1-\frac1d,
                             \end{cases}
  \]
  with $h(x) = H(x,1-x) = -x\log x-(1-x)\log(1-x)$ the binary entropy.
  A simplified, but universal bound reads
  \[
    |S(\rho)-S(\sigma)| \leq \epsilon \log d + h(\epsilon).
  \]
\end{lemma}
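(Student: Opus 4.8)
The plan is to reduce the quantum statement to a purely classical optimization over probability vectors and then settle that with a coupling argument. Since both the von Neumann entropy and the trace distance interact with the spectra in a controlled way, I would first invoke Mirsky's theorem: writing $r=\lambda^\downarrow(\rho)$ and $s=\lambda^\downarrow(\sigma)$ for the decreasingly ordered eigenvalue vectors, one has $\frac12\|r-s\|_1\le\frac12\|\rho-\sigma\|_1\le\epsilon$, while $S(\rho)=H(r)$ and $S(\sigma)=H(s)$ depend only on the spectra. This turns the claim into an assertion about two probability distributions $r,s$ on $d$ symbols with total variation at most $\epsilon$: their Shannon entropies differ by at most $\epsilon\log(d-1)+h(\epsilon)$.

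For the classical problem I would use the maximal coupling decomposition. Setting $w_i=\min(r_i,s_i)$, the total variation identity gives $\sum_i w_i=1-\epsilon$ (assuming the distance equals $\epsilon$; the inequality case is handled at the end by monotonicity). Hence $r=(1-\epsilon)\tilde w+\epsilon p$ and $s=(1-\epsilon)\tilde w+\epsilon q$, where $\tilde w=w/(1-\epsilon)$ is a probability vector and $p=(r-w)/\epsilon$, $q=(s-w)/\epsilon$ are probability vectors supported on the \emph{disjoint} sets $P=\{i:r_i>s_i\}$ and $N=\{i:r_i<s_i\}$ respectively.

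The crux is then a grouping (chain-rule) estimate. Introducing an indicator variable recording whether a sample comes from the common part $\tilde w$ or the deviation part, the sandwich that the entropy of a marginal lies between the conditional and the joint entropy yields, for $r$, the pair of bounds $(1-\epsilon)H(\tilde w)+\epsilon H(p)\le H(r)\le h(\epsilon)+(1-\epsilon)H(\tilde w)+\epsilon H(p)$, and analogously for $s$. Subtracting, the common term $(1-\epsilon)H(\tilde w)$ cancels and I obtain $H(r)-H(s)\le h(\epsilon)+\epsilon\bigl(H(p)-H(q)\bigr)$. The tight constant emerges from the support count: since $\epsilon>0$ forces $N\neq\emptyset$, we have $|P|\le d-1$, so $H(p)\le\log(d-1)$, while $H(q)\ge0$. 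This gives $H(r)-H(s)\le\epsilon\log(d-1)+h(\epsilon)$, and the symmetric argument bounds $H(s)-H(r)$ the same way.

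Finally I would tidy up the edge cases. Because the true distance may be some $\epsilon'\le\epsilon$, and because $f(\epsilon)=\epsilon\log(d-1)+h(\epsilon)$ is increasing on $[0,1-\tfrac1d]$ with $f(1-\tfrac1d)=\log d$, monotonicity transfers the bound from $\epsilon'$ to $\epsilon$ whenever $\epsilon\le 1-\tfrac1d$; for $\epsilon>1-\tfrac1d$ the trivial bound $|S(\rho)-S(\sigma)|\le\log d$ applies. The simplified universal inequality then follows from $\log(d-1)\le\log d$ together with the elementary check that $\epsilon\log d+h(\epsilon)\ge\log d$ on $(1-\tfrac1d,1]$. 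I expect the main obstacle to be obtaining the \emph{tight} constant $\log(d-1)$ rather than a crude $\log d$; the decisive observation is that the deviation distribution $p$ is forced onto at most $d-1$ symbols.
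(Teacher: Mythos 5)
Your proposal is correct, but it reaches the result by a genuinely different route in both steps. For the quantum-to-classical reduction, the paper does not use Mirsky's inequality here (it invokes it only later, in the proof of Proposition~\ref{prop:almost-purification}); instead it assumes w.l.o.g.\ $S(\rho)\leq S(\sigma)$ and dephases both states in the eigenbasis of $\rho$, so that $\rho$ is fixed, $S(\sigma)\leq S\bigl(E(\sigma)\bigr)$, and the trace distance contracts -- this needs only monotonicity facts, whereas your spectral reduction is symmetric (no w.l.o.g.) at the price of citing a matrix-analytic theorem. For the classical core, the paper builds the maximal coupling of $X\sim r$ and $Y\sim s$ with $\Pr\{X\neq Y\}\leq\epsilon$ and applies Fano's inequality to $H(X|Y)$, while you unpack the same maximal coupling into the mixture decompositions $r=(1-\epsilon)\tilde{w}+\epsilon p$ and $s=(1-\epsilon)\tilde{w}+\epsilon q$ and use the concavity sandwich of Eq.~(\ref{eq:concavity-bounds}) together with the support count $|P|\leq d-1$ (valid since $N\neq\emptyset$ whenever $\epsilon>0$), which correctly yields the tight constant $\log(d-1)$ because $p$ and $q$ live on disjoint sets. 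Your version is self-contained (no Fano) and, pleasingly, is the same convex-decomposition technique the paper deploys for Lemma~\ref{lemma:alicki-fannes-new}, so it unifies the two proofs; the paper's coupling-plus-Fano route instead connects directly to the classical lineage it credits (Csisz\'{a}r, Zhang) and packages the bound in one standard inequality. Your treatment of the edge cases -- monotonicity of $\epsilon\log(d-1)+h(\epsilon)$ on $[0,1-\frac1d]$ with value $\log d$ at the endpoint, the trivial bound beyond, and the check that $\epsilon\log d+h(\epsilon)\geq\log d$ for $\epsilon>1-\frac1d$ -- is also correct and in fact more explicit than the paper, which simply restricts attention to $\epsilon\leq 1-\frac1d$.
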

We include a short proof for self-containedness, and also because 
it deserves to be known better. It seems that it was first found by 
Petz~\cite[Thm.~3.8]{Petz:book}, who credits 
Csisz\'{a}r for the classical case; the latter seems to have appeared
first in Zhang's paper~\cite{Zhang} (see also~\cite{Sason}).

\medskip
\begin{proof}
We only have to treat the case $\epsilon \leq 1-\frac1d$.
We begin with the classical case of two probability distributions
$p$ and $q$ on the same ground set of $d$ elements.  
It is well known, and in fact elementary to confirm, 
that one can find two jointly distributed random variables, $X \sim p$ 
and $Y\sim q$
(meaning $X$ is distributed according to the probability law $p$,
and $Y$ according to $q$), 
with $\Pr\{X\neq Y\} = \frac12 \|p-q\|_1 \leq \epsilon$.
The crucial idea is to let $\Pr\{X=Y=x\} = \min(p_x,q_x)$ and to distribute
the remaining probability weight suitably off the diagonal.
(This is also the minimum probability over all such coupled
random variables~\cite{Zhang}.
For the reader with a taste for the sophisticated,
%If one wants to find a higher degree of sophistication here,
this is the Kantorovich-Rubinshtein dual formula for the Wasserstein 
distance in the case of the trivial
metric $d(x,y)=1$ for all $x\neq y$ and $d(x,x)=0$,
cf.~the broad survey~\cite{Wasserstein-survey}.)
Then, by the monotonicity of the Shannon entropy under taking marginals
and Fano's inequality (see~\cite{CoverThomas}),
\[\begin{split}
  H(X)-H(Y) &\leq H(XY)-H(Y) \\
            &=    H(X|Y) \leq \epsilon \log(d-1) + h(\epsilon),
\end{split}\]
and likewise for $H(Y)-H(X)$. 
[For the simplified bound, we use $H(X|Y) \leq \epsilon \log d + h(\epsilon)$.]

Next, we reduce the quantum case to the classical one: 
W.l.o.g.~$S(\rho) \leq S(\sigma)$,
and consider the dephasing operation $E$ in the eigenbasis of $\rho$, 
which maps $\rho$ to itself, a diagonal matrix with a probability distribution
$p$ along the diagonal, and $\sigma$ to $E(\sigma)$, 
a diagonal matrix with a probability distribution $q$ along the diagonal.
Hence
\[
  H(p) = S(\rho) \leq S(\sigma) \leq S\bigl(E(\sigma)\bigr) = H(q).
\]
At the same time, $\|p-q\|_1 = \| E(\rho)-E(\sigma) \|_1 \leq \|\rho-\sigma\|_1$,
and so, using the classical case,
\[
  |S(\rho)-S(\sigma)| \leq H(q)-H(p) \leq \epsilon\log(d-1) + h(\epsilon).
\]

Note that the inequality is tight for all $\epsilon$ and $d$, 
e.g.~by $\sigma = \proj{0}$
and $\rho = (1-\epsilon)\proj{0} + \frac{\epsilon}{d-1}(\1-\proj{0})$.
\end{proof}

\medskip
We are interested in bounds of the above form, i.e.~only referring to the trace
distance of the states and some general global parameter specifying the
system, for a number of entropic quantities, starting with the conditional
von Neumann entropy, relative entropy distances from certain sets, etc, which
have numerous applications in quantum information theory and quantum statistical
physics. Furthermore, and perhaps even more urgently, in situations of
infinite dimensional Hilbert spaces, where the above form of the Fannes
inequality becomes trivial.

\medskip
The rest of the paper is structured as follows: in Section~\ref{sec:conditional}
we present and prove an almost tight version of Lemma~\ref{lemma:fannes-audenaert}
for the conditional entropy (originally due to Alicki and Fannes~\cite{AlickiFannes}),
then in Section~\ref{sec:relative} we generalize the principle behind our proof 
to a family of relative entropy distance measures from a convex set; in these two
sections we also present some illustrative applications of the conditional
entropy bounds to two entanglement measures, $E_R$ and $E_F$, as well as their
regularizations.
In Section~\ref{sec:bounded} we expand the methodology of the first part of the
paper to infinite dimensional systems, where Fannes-type continuity bounds
are obtained under an energy constraint for a broad class of Hamiltonians,
and specifically for quantum harmonic oscillators. 
All entropy continuity bounds are stated as \emph{Lemmas}, while the 
applications appear as \emph{Corollaries}, and two auxiliary
results (on ``quantum coupling'' of density matrices) as
\emph{Propositions}. The absence of \emph{Theorems} is meant
to encourage readers to apply the results presented here.

\section{Conditional entropy}
\label{sec:conditional}
Alicki and Fannes~\cite{AlickiFannes} proved an extension of the 
Fannes inequality for the conditional entropy
\[
  S(A|B)_\rho = S(\rho^{AB})-S(\rho^B),
\]
defined for states $\rho$ on a bipartite (tensor product) Hilbert space
$A\otimes B$. While a double application of Lemma~\ref{lemma:fannes-audenaert}
would yield such a bound involving both the dimensions of $A$ and $B$,
Alicki and Fannes show that if $\|\rho-\sigma\|_1 \leq \epsilon \leq 1$,
then
\[
  \bigl|S(A|B)_\rho-S(A|B)_\sigma\bigr| \leq 4\epsilon \log|A| + 2h(\epsilon).
\]
In particular, this form is independent of the dimension of $B$,
which might even be infinite. Note that for classical, Shannon,
conditional entropy, an inequality like the above can be obtained from 
Lemma~\ref{lemma:fannes-audenaert} by convex combination, resulting
in a bound like that of Lemma~\ref{lemma:fannes-audenaert} (see below).
%, $\epsilon\log|A| + h(\epsilon)$.

The Alicki-Fannes inequality has several applications in quantum 
information theory, from the proof of asymptotic continuity of
entanglement measures --- most notably squashed entanglement~\cite{E-sq} and 
conditional entanglement of mutual information (CEMI)~\cite{YHW} ---,
to the continuity of quantum channel capacities~\cite{LeungSmith}, and 
on to the recent discussion of approximately degradable channels~\cite{Sutter-et-al}.

We present a simple proof of the Alicki-Fannes inequality that yields the
stronger form of Lemma~\ref{lemma:alicki-fannes-new}.
One of the themes of the present paper, to which we draw attention here, 
is the use of entropy inequalities in the proofs. 
In particular, we make use of the concavity of the conditional 
entropy (which is equivalent to strong subadditivity of the von Neumann 
entropy)~\cite{SSA}. 
In the following proof we will specifically rely on two inequalities
expressing the concavity of the entropy and the fact that it is not 
``too concave''~\cite{KimRuskai}:
\begin{equation}
  \label{eq:concavity-bounds}
  \sum_i p_i S(\rho_i) \leq S\left( \sum_i p_i\rho_i \right) \leq \sum_i p_i S(\rho_i) + H(p).
\end{equation}
By introducing a bipartite state $\rho = \sum_i p_i \rho_i^A \ox \proj{i}^I$,
this is seen to be equivalent to
\[
  S(A|I) \leq S(A) \leq S(AI) = S(A|I) + S(I),
\]
which consists of two applications of strong subadditivity.

\begin{lemma}
  \label{lemma:alicki-fannes-new}
  For states $\rho$ and $\sigma$ on a Hilbert space $A\otimes B$,
  if $\frac12\|\rho-\sigma\|_1 \leq \epsilon \leq 1$, then
  \[
    \bigl|S(A|B)_\rho - S(A|B)_\sigma\bigr| 
         \leq 2\epsilon \log|A| + (1+\epsilon)\,h\!\left(\!\frac{\epsilon}{1+\epsilon}\!\right).
  \]
  
  If $B$ is classical in the sense that both $\rho$ and $\sigma$ are so-called
  qc-states, i.e.~with an orthonormal basis $\{\ket{x}\}$,
  \[
    \rho   = \sum_x p_x \rho_x^A \ox \proj{x}^B, \quad
    \sigma = \sum_x q_x \sigma_x^A \ox \proj{x}^B,
  \]
  and analogously if both are cq-states,
  then this can be tightened to
  \[
    \bigl|S(A|B)_\rho - S(A|B)_\sigma\bigr| 
          \leq \epsilon \log|A| + (1+\epsilon)\,h\!\left(\!\frac{\epsilon}{1+\epsilon}\!\right).
  \]
\end{lemma}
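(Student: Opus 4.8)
The plan is to establish the one-sided inequality $S(A|B)_\rho - S(A|B)_\sigma \le 2\epsilon\log|A| + (1+\epsilon)\,h\!\left(\frac{\epsilon}{1+\epsilon}\right)$; the reverse direction, and hence the absolute value, then follows by exchanging the roles of $\rho$ and $\sigma$. Write $\Delta = \frac12\|\rho-\sigma\|_1 \le \epsilon$ and decompose the traceless Hermitian operator $\rho-\sigma = \Delta(\omega_+ - \omega_-)$ into its normalized positive and negative parts, so that $\omega_\pm$ are states with orthogonal supports. The crucial device is the single ``interpolating'' state $\Omega := \frac{1}{1+\Delta}(\rho + \Delta\,\omega_-) = \frac{1}{1+\Delta}(\sigma + \Delta\,\omega_+)$, the two expressions coinciding precisely because $\rho-\sigma = \Delta(\omega_+-\omega_-)$. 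Thus $\Omega$ admits two convex decompositions with the same weights $p = \frac{1}{1+\Delta}$ and $1-p = \frac{\Delta}{1+\Delta}$: one built from $\rho$ and $\omega_-$, the other from $\sigma$ and $\omega_+$.

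Next I would attach a classical flag register $X$ recording which branch of a decomposition one is in, and invoke the conditional-entropy analog of \eqref{eq:concavity-bounds}, namely $S(A|BX) \le S(A|B) \le S(A|BX) + S(X)$, which holds for any extension with classical $X$ and again follows from two applications of strong subadditivity. Applied to $\hat\Omega_\rho = p\,\rho\ox\proj{0}^X + (1-p)\,\omega_-\ox\proj{1}^X$, the lower bound gives $S(A|B)_\Omega \ge p\,S(A|B)_\rho + (1-p)\,S(A|B)_{\omega_-}$; applied to $\hat\Omega_\sigma = p\,\sigma\ox\proj{0}^X + (1-p)\,\omega_+\ox\proj{1}^X$, the upper bound gives $S(A|B)_\Omega \le p\,S(A|B)_\sigma + (1-p)\,S(A|B)_{\omega_+} + h(p)$, since $S(X)=h(p)$. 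Both extensions reduce to $\Omega$ on $AB$, so chaining the lower and upper bounds through the common value $S(A|B)_\Omega$ isolates $p\bigl(S(A|B)_\rho - S(A|B)_\sigma\bigr)$ on one side.

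It then remains to control the residual term $(1-p)\bigl(S(A|B)_{\omega_+} - S(A|B)_{\omega_-}\bigr)$. Using the universal bounds $|S(A|B)| \le \log|A|$, this difference is at most $2\log|A|$. Dividing through by $p$ and substituting $p=\frac{1}{1+\Delta}$ turns the weights into $\frac{1-p}{p}=\Delta$ and $\frac{1}{p}=1+\Delta$, producing $2\Delta\log|A| + (1+\Delta)h\!\left(\frac{\Delta}{1+\Delta}\right)$; since $(1+\Delta)h\!\left(\frac{\Delta}{1+\Delta}\right) = (1+\Delta)\log(1+\Delta) - \Delta\log\Delta$ is increasing in $\Delta$, one may finally replace $\Delta$ by its upper bound $\epsilon$.

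For the qc (or, symmetrically, cq) refinement, observe that $\rho-\sigma$, and hence $\omega_\pm$, stay block-diagonal in the classical basis $\{\ket{x}\}$, so $\omega_\pm$ are again qc- (resp.\ cq-) states. For such states the conditional entropy is non-negative and bounded above by $\log|A|$ (e.g.\ for a qc-state $S(A|B) = \sum_x r_x S(\rho_x^A) \ge 0$), so the residual difference $S(A|B)_{\omega_+} - S(A|B)_{\omega_-}$ is now at most $\log|A|$ instead of $2\log|A|$, replacing the factor $2$ by $1$ and yielding the sharper bound. I expect the main obstacle to be conceptual rather than computational: hitting on the interpolating state $\Omega$ with its two coincident decompositions, and recognizing that pairing concavity against its ``not too concave'' counterpart across these two decompositions is exactly what cancels $S(A|B)_\Omega$ and leaves the two residual conditional entropies with the correct coefficients; the binary-entropy bookkeeping and the monotonicity check are then routine.
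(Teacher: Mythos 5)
Your proof is correct and follows essentially the same route as the paper: your interpolating state $\Omega$ with its two coincident convex decompositions is exactly the paper's $\omega = \frac{1}{1+\epsilon}\sigma + \frac{\epsilon}{1+\epsilon}\Delta = \frac{1}{1+\epsilon}\rho + \frac{\epsilon}{1+\epsilon}\Delta'$ (your $\omega_+,\omega_-$ are its $\Delta,\Delta'$), and the pairing of concavity against the almost-concavity bound, the $\pm\log|A|$ bounds on conditional entropy, and the block-diagonality argument for the qc/cq refinement all match the paper's proof. The only cosmetic difference is that you justify the almost-concavity step $S(A|B)_\Omega \leq \sum_i p_i\, S(A|B)_{\omega_i} + H(p)$ via a classical flag register and the inequality $S(A|B) \leq S(A|BX) + S(X)$, whereas the paper derives it from the variational formula $-S(A|B)_\omega = \min_\xi D\bigl(\omega^{AB}\|\1^A\ox\xi^B\bigr)$ together with Eq.~(\ref{eq:concavity-bounds}) --- two equivalent justifications of the same inequality.
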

\begin{proof}
The right hand side is monotonic in $\epsilon$, hence we may assume
$\frac12\|\rho-\sigma\|_1 = \epsilon$.
Let $\epsilon\Delta = (\rho-\sigma)_+$ be the positive part of $\rho-\sigma$.
Note that because this difference is traceless and its trace
norm equals $2\epsilon$, $\Delta$ is a bona fide state.
Furthermore,
\[\begin{split}
  \rho &=    \sigma + (\rho-\sigma) \\
       &\leq \sigma + \epsilon\Delta \\
       &=    (1+\epsilon)\left( \frac{1}{1+\epsilon}\sigma 
                               +\frac{\epsilon}{1+\epsilon}\Delta \right) \\
       &=:   (1+\epsilon)\omega.
\end{split}\]
By letting $\epsilon\Delta' := (1+\epsilon)\omega-\rho$, we obtain
another state $\Delta'$, such that
\begin{equation}
  \label{eq:decompositions}
  \omega = \frac{1}{1+\epsilon}\sigma + \frac{\epsilon}{1+\epsilon}\Delta
         = \frac{1}{1+\epsilon}\rho + \frac{\epsilon}{1+\epsilon}\Delta'.
\end{equation}
This is a slightly optimized version of the trick in the
proof of Alicki and Fannes~\cite{AlickiFannes}; cf.~\cite{MosonyiHiai}.

Now, we use the following well-known variational characterization of
the conditional entropy:
\[
  -S(A|B)_\omega = \min_\xi D\bigl(\omega^{AB}\|\1^A\ox\xi^B\bigr),
\]
where $D(\rho\|\sigma) = \tr\rho(\log\rho-\log\sigma)$ is the
quantum relative entropy~\cite{Umegaki,OhyaPetz}.
Choosing an optimal state $\xi$ for $\omega$ (which is $\xi = \omega^B$),
we have, from Eq.~(\ref{eq:decompositions}),
\[\begin{split}
  S(A|B)_\omega &=   -D\bigl(\omega^{AB}\|\1^A\ox\xi^B\bigr)         \\
                &=    S(\omega) + \tr\omega \log\xi^B                \\
                &\leq h\!\left(\!\frac{\epsilon}{1+\epsilon}\!\right)
                       + \frac{1}{1+\epsilon}S(\rho) + \frac{\epsilon}{1+\epsilon}S(\Delta') \\
                &\phantom{h\!\left(\!\frac{\epsilon}{1+\epsilon}\!\right)}
                       + \frac{1}{1+\epsilon}\tr\rho\log\xi^B + \frac{\epsilon}{1+\epsilon}\tr\Delta'\log\xi^B \\
                &=    h\!\left(\!\frac{\epsilon}{1+\epsilon}\!\right)
                       \!-\! \frac{1}{1\!+\!\epsilon}D(\rho\|\1\ox\xi) 
%                &\phantom{:=h\!\left(\!\frac{\epsilon}{1+\epsilon}\!\right)}
                       \!-\! \frac{\epsilon}{1\!+\!\epsilon}D(\Delta'\|\1\ox\xi) \\
                &\leq h\!\left(\!\frac{\epsilon}{1+\epsilon}\!\right)
                       + \frac{1}{1+\epsilon} S(A|B)_\rho + \frac{\epsilon}{1+\epsilon} S(A|B)_{\Delta'},
\end{split}\]
where in the third line we have used the concavity upper bound
from Eq.~(\ref{eq:concavity-bounds}).
Using the other decomposition in Eq.~(\ref{eq:decompositions}),
the concavity of the conditional entropy, i.e.~the lower bound
in Eq.~(\ref{eq:concavity-bounds}), gives
\[
  S(A|B)_\omega \geq \frac{1}{1+\epsilon} S(A|B)_\sigma + \frac{\epsilon}{1+\epsilon} S(A|B)_{\Delta}.
\]

Putting these two bounds together and multiplying by $1+\epsilon$, we
arrive at
\[\begin{split}
  S(A|B)_\sigma - S(A|B)_\rho &\leq \epsilon\bigl( S(A|B)_{\Delta'}-S(A|B)_{\Delta}\bigr) \\
                              &\phantom{=:} + (1+\epsilon) \,h\!\left(\!\frac{\epsilon}{1+\epsilon}\!\right).
\end{split}\]
The proof of the general bound is concluded observing that the conditional 
entropy of any state is bounded between $-\log|A|$ and $+\log|A|$.

\medskip
%\textcolor{red}{CLASSICAL PART:}
%and denoting $m_x = \min\{p_x,q_x\}$, we have
%\[\begin{split}
%  \epsilon &\geq \frac12 \|\rho-\sigma\|_1 \\
%               &=       \sum_x \frac12 \| p_x \rho_x - q_x \sigma_x \|_1 \\
%               &=       \sum_x \frac12 \bigl\| m_x(\rho_x-\sigma_x) + (p_x-m_x) \rho_x - (q_x-m_x) \sigma_x \bigr\|_1 
%\end{split}\]
For the case of two qc-states or two cq-states as above, note that the
states $\Delta$ and $\Delta'$ are of the same, qc-form (cq-form, resp.),
and so their conditional entropies are between
$0$ and $\log|A|$.
\end{proof}

\begin{remark}
\normalfont
Lemma~\ref{lemma:alicki-fannes-new} is almost best possible, as we can
see by considering the example of $\sigma^{AB} = \Phi_d$, the maximally
entangled state on $A=B=\CC^d$, and 
$\rho^{AB} = (1-\epsilon) \Phi_d + \frac{\epsilon}{d^2-1}(\1-\Phi_d)$.
Clearly, $\frac12 \|\rho-\sigma\|_1 = \epsilon$, while
\[\begin{split}
  S(A|B)_\rho & -S(A|B)_\sigma \\
              &= \bigl( \epsilon\log(d^2-1)+h(\epsilon)-\log d\bigr)-(-\log d) \\
              &= 2\epsilon\log d + h(\epsilon) - O\left(\frac{\epsilon}{d^2}\right).
\end{split}\]
This asymptotically matches Lemma~\ref{lemma:alicki-fannes-new}
for large $d$ and small $\epsilon$.
\end{remark}

\medskip
As an application of Lemma~\ref{lemma:alicki-fannes-new}, we 
can obtain tighter continuity bounds on various quantum channel 
capacities, simply substituting our tighter bound rather than the original formulation
of Alicki and Fannes in the proofs of Leung and Smith~\cite{LeungSmith}.

As a token, we demonstrate a tight version of the asymptotic continuity 
of the entanglement of formation~\cite{BDSW},
\[
  E_F(\rho) = \inf \sum_x p_x S(\tr_B \rho_x) \text{ s.t. } \rho = \sum_x p_x\rho_x
\]
for a state $\rho^{AB}$ on the bipartite system $A\ox B$,
originally due to Nielsen~\cite{Nielsen-continuity}.
We then go on to prove asymptotic continuity for its regularization, 
the entanglement cost~\cite{HHT-EC},
\[
  E_C(\rho) = E_F^\infty(\rho) = \lim_{n\rightarrow\infty} \frac1n E_F(\rho^{\ox n}),
\]
which, albeit following the general ``telescoping''
strategy of~\cite{LeungSmith}, requires a new idea, and seems 
not to have been known before~\cite{antisymm}. 
Note that $E_C$ is different from $E_F$~\cite{Hastings}.

\begin{corollary}
  \label{cor:EoF}
  Let $\rho$ and $\sigma$ be states on the system $A\ox B$, denoting the smaller of 
  the two dimensions by $d$. Then, $\frac12 \|\rho-\sigma\|_1 \leq \epsilon$ implies,
  with $\delta = \sqrt{\epsilon(2-\epsilon)}$,
  \begin{align*}
    |E_F(\rho)-E_F(\sigma)| &\leq   \delta \log d + (1+\delta)\,h\!\left(\!\frac{\delta}{1+\delta}\!\right), \\
    |E_C(\rho)-E_C(\sigma)| &\leq 2 \delta \log d + (1+\delta)\,h\!\left(\!\frac{\delta}{1+\delta}\!\right).
  \end{align*}
\end{corollary}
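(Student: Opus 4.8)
The plan is to handle $E_F$ directly and then bootstrap to $E_C$ by a telescoping argument in which a new ``quantum coupling'' supplies the crucial localization. For $E_F$, I would first convert the trace-distance hypothesis into a statement about purifications. Since $\frac12\|\rho-\sigma\|_1\le\epsilon$ implies, by the Fuchs--van de Graaf inequalities, the fidelity bound $F(\rho,\sigma)\ge 1-\epsilon$, Uhlmann's theorem lets me choose purifications $\psi^{ABC}$ of $\rho$ and $\phi^{ABC}$ of $\sigma$ with $|\braket{\psi}{\phi}|=F(\rho,\sigma)$, so that $\frac12\|\psi-\phi\|_1=\sqrt{1-F^2}\le\sqrt{\epsilon(2-\epsilon)}=\delta$; this is exactly where $\delta$ enters. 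Next I would use the variational form of the entanglement of formation through the purifying system: for any state $\omega^{AB}$ with purification $\chi^{ABC}$ and any rank-one measurement $\Lambda\colon C\to X$, the post-measured state $(\id_{AB}\otimes\Lambda)\chi=\sum_x p_x\,\omega_x^{AB}\otimes\proj{x}^X$ has pure components, whence $\sum_x p_x S(\omega_x^A)=S(A|X)$, and minimizing over $\Lambda$ (equivalently, over pure-state decompositions, by HJW) gives $E_F(\omega^{AB})=\min_\Lambda S(A|X)$.

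I would then fix the optimal measurement $\Lambda^\star$ for $\sigma$ and apply the \emph{same} $\Lambda^\star$ to $\psi$, producing a generally suboptimal pure-state decomposition of $\rho$; thus $E_F(\rho)\le S(A|X)_{\omega_\rho}$ while $E_F(\sigma)=S(A|X)_{\omega_\sigma}$, where $\omega_\rho,\omega_\sigma$ are the resulting cq-states on $AX$ after discarding $B$. Monotonicity of the trace norm under the partial trace over $B$ and under $\id\otimes\Lambda^\star$ gives $\frac12\|\omega_\rho-\omega_\sigma\|_1\le\frac12\|\psi-\phi\|_1\le\delta$. Since $X$ is classical, the cq-case of Lemma~\ref{lemma:alicki-fannes-new} applies with its sharp coefficient $1$, bounding $S(A|X)_{\omega_\rho}-S(A|X)_{\omega_\sigma}$ by $\delta\log|A|+(1+\delta)\,h\!\left(\frac{\delta}{1+\delta}\right)$. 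Exchanging the roles of $\rho$ and $\sigma$ yields the absolute value, and since $E_F$ is invariant under swapping $A$ and $B$, I may run the argument on whichever subsystem is smaller, replacing $\log|A|$ by $\log d$. This settles the $E_F$ bound.

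For $E_C=E_F^\infty$ I would use the telescoping strategy of Leung--Smith~\cite{LeungSmith}. Writing $\tau_j=\rho^{\otimes j}\otimes\sigma^{\otimes(n-j)}$, the difference $E_F(\rho^{\otimes n})-E_F(\sigma^{\otimes n})$ telescopes into $\sum_{j=1}^n\bigl(E_F(\tau_j)-E_F(\tau_{j-1})\bigr)$, where consecutive terms differ only in the single factor in position $j$ ($\rho$ versus $\sigma$), so their trace distance is again at most $\epsilon$. If each step could be bounded by a constant $B$ independent of $n$, then dividing by $n$ and taking the limit would give $|E_C(\rho)-E_C(\sigma)|\le B$. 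The naive attempt — applying the $E_F$ bound directly to $\tau_j,\tau_{j-1}$ viewed as bipartite states on $A^{\otimes n}\otimes B^{\otimes n}$ — fails, because that bound scales with $\log|A^{\otimes n}|=n\log|A|$, and summing $n$ such terms before dividing by $n$ does not converge. This is the crux, and the reason $E_C$ ``requires a new idea'': one must show that changing a \emph{single} factor from $\sigma$ to $\rho$, tensored against an arbitrary background state $\theta$, perturbs $E_F$ by at most $\approx\log d$ rather than $\approx\log|A^{\otimes n}|$.

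This is where I would invoke the quantum coupling, aiming for the uniform per-step estimate $|E_F(\rho\otimes\theta)-E_F(\sigma\otimes\theta)|\le 2\delta\log d+(1+\delta)\,h\!\left(\frac{\delta}{1+\delta}\right)$, valid for every background $\theta$. The idea is to construct a coupling of $\rho$ and $\sigma$ — a joint state of two copies of the system with the correct marginals, controlled by $\delta$ through the Uhlmann/fidelity data above — that lets me convert an optimal pure-state decomposition of $\sigma\otimes\theta$ into a decomposition of $\rho\otimes\theta$ by an operation confined essentially to the single factor being changed. Because that correction lives on one factor of dimension $d$, the attendant conditional-entropy continuity bound costs only $\log d$; and since the coupling entangles the new factor with the old, the relevant conditioning system is now quantum rather than classical, so the general (coefficient-$2$) case of Lemma~\ref{lemma:alicki-fannes-new} is used, producing the factor $2$ in $2\delta\log d$. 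Feeding this per-step bound into the telescoping sum and dividing by $n$ gives the claimed $E_C$ estimate. I expect the construction and analysis of the quantum coupling — precisely, proving that one factor's worth of local correction suffices and that its entropic cost is governed by $\delta\log d$ \emph{uniformly in} $\theta$ — to be the main obstacle; everything else is assembly around Lemma~\ref{lemma:alicki-fannes-new}.
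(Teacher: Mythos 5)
Your $E_F$ half is exactly the paper's proof: Uhlmann purifications with $|\braket{\varphi}{\psi}|=F(\rho,\sigma)\geq 1-\epsilon$ and hence $\frac12\|\varphi-\psi\|_1\leq\delta$, Schr\"odinger steering (HJW) to realize the optimal decomposition of $\sigma$ as a POVM on the purifying system, the same qc-channel applied to the purification of $\rho$, contractivity of the trace norm, and the qc-case of Lemma~\ref{lemma:alicki-fannes-new} with coefficient $1$. That part is complete and correct, including the reduction to the smaller dimension.

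The gap is in the $E_C$ half, and you have located it yourself: the existence of the coupling is not an assembly detail but precisely the paper's Proposition~\ref{prop:quantum-coupling}, whose construction is the genuinely new content of the result --- and your specification of what the coupling must achieve is not sharp enough for the argument to close. What is needed is a (generally mixed) extension $\Theta^{ABR}$ with $\Theta^{AB}=\rho$ and, \emph{exactly}, $\Theta^R=\psi^R$ for the fixed purification $\psi$ of $\sigma$, together with $F(\psi,\Theta)\geq 1-\epsilon$. Exact equality of the $R$-marginal is what forces the measurement $\cM$ on $RR'$ to yield the \emph{same} outcome probabilities $p_x$ for $\Theta\ox\upsilon$ as for $\psi\ox\upsilon$, and indeed $\widetilde{\rho}^{A'B'X}=\widetilde{\sigma}^{A'B'X}$; only then does the chain rule $S(AA'|X)=S(A'|X)+S(A|A'X)$ cancel the background term $S(A'|X)$ and confine the per-step difference to $S(A|A'X)_{\widetilde{\rho}}-S(A|A'X)_{\widetilde{\sigma}}$, which Lemma~\ref{lemma:alicki-fannes-new} bounds by $2\delta\log d+(1+\delta)\,h\!\left(\frac{\delta}{1+\delta}\right)$ uniformly in $\Omega$. (The conditioning system $A'X$ is quantum because the optimal POVM may be entangled across $R:R'$, not because ``the coupling entangles the new factor with the old'' --- $\Theta\ox\upsilon$ is product across the factors.) With only approximately matching marginals the $p_x$ differ between the two ensembles and the error terms pick up the background dimension, collapsing back to the useless $n\log d$ scaling. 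The construction you defer is nontrivial: the paper builds $\ket{\vartheta}=\frac{1}{\sqrt{1+\epsilon}}\bigl(\rho^{1/2}\omega^{-1/2}\sigma^{1/2}\ox\1\bigr)\ket{\Phi}$ from pretty good purifications, with $\omega$ the two-way mixture of Eq.~(\ref{eq:decompositions}), proves the contraction bounds $\|X\|,\|Y\|\leq 1$ and $|\bra{\psi}\vartheta\rangle|\geq 1-\epsilon$, and completes $\Theta=\proj{\vartheta}+(1-\braket{\vartheta}{\vartheta})\Delta_1\ox\Delta_2$. One further small omission: since $\Theta$ is mixed, the induced components $\rho_x$ are mixed, so $E_F(\rho\ox\Omega)\leq\sum_x p_x S(\tr_{BB'}\rho_x)$ needs convexity of $E_F$ plus $E_F\leq S$ of the reduced state on each component --- your ``pure-state decomposition'' phrasing skips this.
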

Note that these bounds only depend on the smaller of the two
dimensions, in contrast to~\cite{Nielsen-continuity}; in particular,
they apply even in the case that one of the two Hilbert spaces is infinite
dimensional.

\medskip
\begin{proof}
We may assume w.l.o.g.~that $E_F(\rho) \geq E_F(\sigma)$ and $|B| \geq |A| = d$. 
Choose a purifying system $R \simeq AB$,
and pure states $\varphi^{ABR}$ and $\psi^{ABR}$ with 
$\varphi^{AB}=\rho$ and $\psi^{AB}=\sigma=\psi^R$ such that
\[
  |\bra{\varphi}\psi\rangle| = F(\rho,\sigma) \geq 1-\epsilon,
\]
thus $\frac12 \|\varphi-\psi\|_1 \leq \delta = \sqrt{1-(1-\epsilon)^2}$.
Here, $F(\rho,\sigma) = \|\sqrt{\rho}\sqrt{\sigma}\|_1$ is the fidelity
between two quantum states, and we have used that it is related to the 
trace distance by these well-known inequalities~\cite{FvdG}:
\begin{equation}
  \label{eq:FvdG}
  1-F(\rho,\sigma) \leq \frac12 \|\rho-\sigma\|_1 \leq \sqrt{1-F(\rho,\sigma)^2}.
\end{equation}

By an observation of Schr\"odinger (which he called ``steering'') in the 
context of his investigation of quantum entanglement~\cite{Schroedinger:steering}, 
cf.~\cite{HughstonJozsaWootters}, for any convex decomposition 
$\sigma = \sum_x p_x \sigma_x$, there exists a measurement POVM $(M_x)$ 
on $R$ such that $p_x \sigma_x = \tr_R \psi(\1^{AB}\ox M_x^R)$. Introducing
the qc-channel $\cM(\xi) = \sum_x \tr\xi M_x \proj{x}$ from $R$ to a suitable
space $X$, we then have
\begin{align}
  \label{eq:sigma-tilde}
  \widetilde{\sigma} := (\id_{AB}\ox\cM)\psi
                              &= \sum_x p_x \sigma_x^{AB} \ox \proj{x}^X, \\
  \text{and}\quad 
  S(A|X)_{\widetilde{\sigma}} &= \sum_x p_x S(\tr_B \sigma_x). \nonumber
\end{align}
Let us choose an optimal decomposition for the purpose of entanglement
of formation, and the corresponding POVM and quantum channel, 
i.e.~$E_F(\sigma) = S(A|X)_{\widetilde{\sigma}}$. Applying the
same to $\varphi^{ABR}$, we obtain
\[
  \widetilde{\rho} := (\id_{AB}\ox\cM)\varphi
                    = \sum_x q_x \rho_x^{AB} \ox \proj{x}^X, 
\]
with $q_x = \tr\varphi^R M_x$. Hence,
\[
  E_F(\rho) \leq \sum_x p_x S(\tr_B \rho_x) = S(A|X)_{\widetilde{\rho}}.
\]
Observe that by the contractivity of the trace norm under cptp maps,
\[
  \delta \geq \|\psi-\varphi\|_1 
         \geq \|\widetilde{\sigma}-\widetilde{\rho}|\|_1.
\]
Now we can invoke the classical part of Lemma~\ref{lemma:alicki-fannes-new},
\[\begin{split}
  E_F(\rho) - E_F(\sigma) &\leq S(A|X)_{\widetilde{\rho}} - S(A|X)_{\widetilde{\sigma}} \\
                          &\leq \delta \log d + (1+\delta)\,h\!\left(\!\frac{\delta}{1+\delta}\!\right),
\end{split}\]
and we are done.

For the regularization, consider any integer $n$ and
\begin{equation}\begin{split}
  \label{eq:chain-trick}
  \Big| E_F&\bigl(\rho^{\ox n}\bigr) - E_F\bigl(\sigma^{\ox n}\bigr) \Big| \\
       &=     \left| \sum_{t=1}^n E_F\bigl(\rho^{\ox t} \ox \sigma^{\ox n-t}\bigr) 
                                  - E_F\bigl(\rho^{\ox t-1} \ox \sigma^{\ox n-t+1}\bigr) \right| \\
       &\leq \sum_{t=1}^n | E_F(\rho \ox \Omega_t) - E_F(\sigma \ox \Omega_t) |,
\end{split}\end{equation}
with $\Omega_t = \rho^{\ox t-1} \ox \sigma^{\ox n-t}$. The proof will be
concluded by showing that for any $\Omega^{A'B'}$,
\[
  | E_F(\rho \ox \Omega) - E_F(\sigma \ox \Omega) | 
                 \leq 2\delta\log d + (1+\delta) \,h\!\left(\!\frac{\delta}{1+\delta}\!\right),
\]
as this will imply from Eq.~(\ref{eq:chain-trick}) that 
\[
  \frac1n \Big| E_F\bigl(\rho^{\ox n}\bigr) - E_F\bigl(\sigma^{\ox n}\bigr) \Big| 
                 \leq 2\delta\log d + (1+\delta) \,h\!\left(\!\frac{\delta}{1+\delta}\!\right).
\]
To see this, assume again w.l.o.g.~that
$E_F(\rho \ox \Omega) \geq E_F(\sigma \ox \Omega)$, and choose a purification
$\upsilon$ of $\Omega$ on $A'B'R'$, with $R' \simeq A'B'$. 
Besides the purification $\psi^{ABR}$ of $\sigma$, we now
need a state (not generally pure) $\Theta^{ABR}$ with
$\Theta^{AB}=\rho$ and $\Theta^R=\psi^R$. Proposition~\ref{prop:quantum-coupling}
below guarantees the existence of such a state with
$F(\psi,\Theta) \geq 1-\epsilon$, hence 
$\frac12 \| \psi-\Theta \|_1 \leq \delta$, once more invoking Eq.~(\ref{eq:FvdG}).
As before we choose an optimal decomposition of $\sigma^{AB}\ox\Omega^{A'B'}$
into states on $AA':BB'$, which we can represent by a POVM and associated
cptp map $\cM:RR' \longrightarrow X$:
\begin{align*}
  \widetilde{\sigma}   &:= (\id_{AA'BB'}\ox\cM)(\psi\ox\upsilon) \\
                       &= \sum_x p_x \sigma_x^{AA'BB'} \ox \proj{x}^X, \\
  E_F(\sigma\ox\Omega) &= S(AA'|X)_{\widetilde{\sigma}} 
                        = \sum_x p_x S(\tr_{BB'} \sigma_x).
\end{align*}
Applying the same map to $\omega\ox\upsilon$, we get
\begin{align*}
  \widetilde{\rho}     &:= (\id_{AA'BB'}\ox\cM)(\Theta\ox\upsilon) \\
                       &= \sum_x p_x \rho_x^{AA'BB'} \ox \proj{x}^X, \\
  E_F(\sigma\ox\Omega) &\leq S(AA'|X)_{\widetilde{\rho}} 
                        =    \sum_x p_x S(\tr_{BB'} \rho_x),
\end{align*}
where we observe that, crucially, the same $p_x$ appear in the 
expressions for $\widetilde{\rho}$ and $\widetilde{\sigma}$. 
Using $\Theta^R = \sigma^T = \psi^R$, we even have
\[
  \widetilde{\rho}^{A'B'X} = (\id_{A'B'}\ox\cM)(\sigma^T\ox\upsilon)
                           = \widetilde{\sigma}^{A'B'X}.
\]
Thus with Lemma~\ref{lemma:alicki-fannes-new}, as desired,
\[\begin{split}
  E_F(\rho\ox\Omega) - E_F(\sigma\ox\Omega)
                     &\leq S(AA'|X)_{\widetilde{\rho}} - S(AA'|X)_{\widetilde{\sigma}} \\
                     &=    S(A|A'X)_{\widetilde{\rho}} - S(A|A'X)_{\widetilde{\sigma}} \\ 
                     &\leq 2\delta\log d + (1+\delta)\,h\!\left(\!\frac{\delta}{1+\delta}\!\right),
\end{split}\]
where in the second line we have used the chain rule
$S(AA'|X) = S(A'|X) + S(A|A'X)$, as well as 
$S(A'|X)_{\widetilde{\rho}} = S(A'|X)_{\widetilde{\sigma}}$.
\end{proof}

%\begin{remark}
%\normalfont
%It is an interesting question to decide asymptotic continuity of 
%the entanglement cost~\cite{HHT-EC}
%\[
%  E_C(\rho) = \inf_n \frac1n E_F\bigl(\rho^{\ox n}\bigr),
%\]
%which can be strictly smaller than $E_F(\rho)$. However, our current
%techniques do not seem to be suitable to tackle this problem.
%\end{remark}

\begin{proposition}[``Quantum coupling'']
  \label{prop:quantum-coupling}
  Given states $\rho$ and $\sigma$ on a Hilbert space $A$,
  with $\frac12 \|\rho-\sigma\|_1 \leq \epsilon$,
  there exist purifications $\ket{\varphi}$ of $\rho$ and 
  $\ket{\psi}$ of $\sigma$, and a (sub-normalized) vector $\ket{\vartheta}$, 
  all three in the tensor square Hilbert space $A \ox A =:A_1 A_2$, such that
  \begin{align*}
    \rho^T\! = \varphi^{A_2}, &\quad \rho       = \varphi^{A_1} \geq \tr_{A_2} \proj{\vartheta}, \\ 
    \sigma   = \psi^{A_1},    &\quad \sigma^T\! = \psi^{A_2}    \geq \tr_{A_1} \proj{\vartheta},
  \end{align*}
  and
  \[
    |\bra{\psi} \vartheta\rangle|,\ |\bra{\varphi} \vartheta\rangle| \geq 1-\epsilon.
  \]
  Here, ${\cdot}^T$ denotes the transpose of a matrix with respect to a chosen basis.

  Consequently, there exists a state $\Theta^{A_1A_2}$ with the properties
  $\Theta^{A_1} = \rho$ and $\Theta^{A_2}=\psi^{A_2}=\sigma^T$, and such that
  $F(\psi,\Theta),\,F(\varphi,\Theta) \geq 1-\epsilon$.
\end{proposition}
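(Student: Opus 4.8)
The plan is to reduce the statement to the existence of a single bridging operator via the vector--operator correspondence, and then to build that operator as a noncommutative ``minimum'' of $\rho$ and $\sigma$. First I fix the two purifications to be the canonical ones, $\ket{\varphi}=(\sqrt{\rho}\ox\1)\ket{\Omega}$ and $\ket{\psi}=(\sqrt{\sigma}\ox\1)\ket{\Omega}$, where $\ket{\Omega}=\sum_i\ket{i}\ket{i}$ is the unnormalised maximally entangled vector on $A_1A_2$; using $\bra{\Omega}(M\ox\1)\ket{\Omega}=\tr M$ one checks at once that $\varphi^{A_1}=\rho$, $\varphi^{A_2}=\rho^T$, $\psi^{A_1}=\sigma$, $\psi^{A_2}=\sigma^T$, which are exactly the required marginals. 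Writing the unknown vector as $\ket{\vartheta}=(K\ox\1)\ket{\Omega}$ for an operator $K$ on $A$, the conditions $\tr_{A_2}\proj{\vartheta}\le\rho$ and $\tr_{A_1}\proj{\vartheta}\le\sigma^T$ become $KK^\dg\le\rho$ and $K^\dg K\le\sigma$, while the overlaps become $\braket{\varphi}{\vartheta}=\tr(\sqrt{\rho}\,K)$ and $\braket{\psi}{\vartheta}=\tr(\sqrt{\sigma}\,K)$. So everything reduces to exhibiting one $K$ with $KK^\dg\le\rho$, $K^\dg K\le\sigma$ and $\Re\tr(\sqrt{\rho}\,K),\,\Re\tr(\sqrt{\sigma}\,K)\ge 1-\epsilon$.

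To fix the target I invoke Eq.~(\ref{eq:FvdG}): $\tfrac12\|\rho-\sigma\|_1\le\epsilon$ gives $F(\rho,\sigma)\ge 1-\epsilon$, and the variational formula for the trace norm yields $1-\tfrac12\|\rho-\sigma\|_1=\tr(\Pi_-\rho)+\tr(\Pi_+\sigma)\ge 1-\epsilon$, where $\Pi_\pm$ project onto the positive/nonpositive eigenspaces of $\rho-\sigma$. This number is the quantum analogue of $\sum_x\min(p_x,q_x)$, the mass of the maximal classical coupling, and it is the mass I want $K$ to capture. Indeed, when $\rho$ and $\sigma$ commute the choice $K=\mathrm{diag}\bigl(\sqrt{\min(p_x,q_x)}\bigr)$ works: $KK^\dg\le\rho$, $K^\dg K\le\sigma$, and $\tr(\sqrt{\rho}\,K)=\sum_x\sqrt{p_x\min(p_x,q_x)}\ge\sum_x\min(p_x,q_x)\ge 1-\epsilon$, and symmetrically for $\sigma$.

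The crux is the construction of $K$ in the noncommutative case, and this is where the main obstacle lies: there is \emph{no} naive matrix minimum, since $\rho-(\rho-\sigma)_+$ --- the obvious candidate for the ``common part'' --- need not be positive semidefinite. My plan is to realise $K$ instead as a \emph{capped transport} from $\sigma$ to $\rho$, i.e.\ $K=X\sqrt{\sigma}$ for a contraction $X$ (so that $K^\dg K=\sqrt{\sigma}X^\dg X\sqrt{\sigma}\le\sigma$ holds automatically), chosen so that $X\sigma X^\dg\le\rho$; classically $X$ reduces to $\mathrm{diag}\bigl(\min(1,\sqrt{p_x/q_x})\bigr)$, and noncommutatively it is obtained from the polar decomposition of $\sqrt{\rho}\,\sqrt{\sigma}^{-1}$ (on the support of $\sigma$) by capping its modulus part at $\1$ via the functional calculus $t\mapsto\min(1,t)$. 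The two overlaps are then $\tr(X\sigma)$ and $\tr(\sqrt{\rho}\,X\sqrt{\sigma})$, and the hard part is to verify simultaneously the operator inequality $X\sigma X^\dg\le\rho$ and the two lower bounds $\ge 1-\epsilon$; in the commuting model both bounds rest on the elementary estimate $\sqrt{p_xq_x}\ge\min(p_x,q_x)$, and the noncommutative versions must be pushed through by operator monotonicity and pinching. (Should an explicit $K$ prove unwieldy, an alternative is to establish feasibility of the semidefinite system $\{KK^\dg\le\rho,\ K^\dg K\le\sigma,\ \Re\tr(\sqrt{\rho}K)\ge 1-\epsilon,\ \Re\tr(\sqrt{\sigma}K)\ge 1-\epsilon\}$ by SDP duality, deriving a contradiction from any infeasibility certificate using $F(\rho,\sigma)\ge 1-\epsilon$.)

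Finally, the ``Consequently'' part is routine once $\vartheta$ is in hand. Let $\eta=1-\|\vartheta\|^2=1-\tr(KK^\dg)$; the deficits $\rho-\tr_{A_2}\proj{\vartheta}$ and $\sigma^T-\tr_{A_1}\proj{\vartheta}$ are positive semidefinite and both have trace $\eta$, so (for $\eta>0$; otherwise take $\Theta=\proj{\vartheta}$) I set $\Theta=\proj{\vartheta}+\tfrac1\eta\bigl(\rho-\tr_{A_2}\proj{\vartheta}\bigr)\ox\bigl(\sigma^T-\tr_{A_1}\proj{\vartheta}\bigr)$. Then $\Theta^{A_1}=\rho$ and $\Theta^{A_2}=\sigma^T$ by construction, and since $\Theta\ge\proj{\vartheta}$ one gets $F(\varphi,\Theta)=\sqrt{\bra{\varphi}\Theta\ket{\varphi}}\ge|\braket{\varphi}{\vartheta}|\ge 1-\epsilon$ and likewise $F(\psi,\Theta)\ge 1-\epsilon$, completing the argument.
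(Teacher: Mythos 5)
Your framing coincides with the paper's: you pick the same ``pretty good'' purifications $(\sqrt{\rho}\ox\1)\ket{\Phi}$ and $(\sqrt{\sigma}\ox\1)\ket{\Phi}$, reduce everything via $\ket{\vartheta}=(K\ox\1)\ket{\Phi}$ to finding one operator $K$ with $KK^\dg\le\rho$, $K^\dg K\le\sigma$ and $\tr(\sqrt{\rho}\,K),\,\tr(\sqrt{\sigma}\,K)\ge 1-\epsilon$, and your closing construction $\Theta=\proj{\vartheta}+\frac1\eta(\rho-\vartheta^{A_1})\ox(\sigma^T-\vartheta^{A_2})$ is exactly the paper's. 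But the heart of the argument --- actually producing $K$ --- is only announced, and the specific candidate you sketch fails. Writing $T=\sqrt{\rho}\,\sigma^{-1/2}=UP$ with $P=(\sigma^{-1/2}\rho\,\sigma^{-1/2})^{1/2}$, so that $\rho=T\sigma T^\dg=UP\sigma PU^\dg$, your capped transport $X=U\min(\1,P)$ satisfies $X\sigma X^\dg\le\rho$ if and only if $m\sigma m\le P\sigma P$ with $m=\min(\1,P)$, and this operator inequality is false in general: in the eigenbasis of $P=\operatorname{diag}(a,b)$ with $a\le 1\le b$,
\[
  P\sigma P-m\sigma m
  =\begin{pmatrix} 0 & a(b-1)\sigma_{12}\\ a(b-1)\overline{\sigma_{12}} & (b^2-1)\sigma_{22}\end{pmatrix},
\]
which has a vanishing diagonal entry but nonvanishing off-diagonal entries whenever $\sigma_{12}\neq 0$, hence is not positive semidefinite (taking $\rho:=\sigma^{1/2}P^2\sigma^{1/2}$, suitably normalized, gives bona fide states, and the constraint $KK^\dg\le\rho$ must hold exactly, for every $\epsilon$). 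Capping the left modulus $|T^\dg|$ fails the same way. Congruence by a capped operator simply does not respect the operator order when the cap fails to commute with $\sigma$ --- the very phenomenon that invalidates $\rho-(\rho-\sigma)_+$, which you rightly rejected; and the SDP-feasibility fallback is likewise not carried out. So there is a genuine gap at the crux.

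The paper's way around this is to regularize with the state $\omega$ of Eq.~(\ref{eq:decompositions}), which dominates \emph{both} states at once, $(1+\epsilon)\omega\ge\rho$ and $(1+\epsilon)\omega\ge\sigma$, and to take $K=\frac{1}{\sqrt{1+\epsilon}}\rho^{1/2}\omega^{-1/2}\sigma^{1/2}$. Then $K=X\sqrt{\sigma}$ with $X=\frac{1}{\sqrt{1+\epsilon}}\rho^{1/2}\omega^{-1/2}$, and $XX^\dg=\frac{1}{1+\epsilon}\sqrt{\rho}\,\omega^{-1}\sqrt{\rho}\le\1$ is immediate from $\omega\ge\frac{1}{1+\epsilon}\rho$ (and symmetrically on the right factor with $\sigma$), so both marginal dominations hold by construction; the overlap bound follows by substituting $\sigma=(1+\epsilon)\omega-\epsilon\Delta$, using operator monotonicity of the square root to get $\tr\sqrt{\rho}\sqrt{(1+\epsilon)\omega}\ge\tr\rho=1$, and H\"older with $\|X\|\le 1$ for the error term. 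If you want to salvage your route, replace the hard cap $t\mapsto\min(1,t)$ by this $\omega$-regularization: it is precisely the device that makes all three constraints verifiable simultaneously.
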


This proposition can be viewed as a quantum analogue of the 
coupling of random variables $X \sim p$ and $Y \sim q$ such that
$\Pr\{X\neq Y\} = \frac12 \|p-q\|_1$, on which the proof
of Lemma~\ref{lemma:fannes-audenaert} relied.

\medskip
\begin{proof}
Fixing an orthonormal basis $\{\ket{i}\}$ of $A$, and introducing the
unnormalized maximally entangled vector
\[
  \ket{\Phi} = \sum_i \ket{i}^{A_1}\ket{i}^{A_2},
\]
we have the following two ``pretty good purifications''~\cite{Winterize-or-die}
of $\rho$ and $\sigma$:
\begin{align*}
  \ket{\varphi} &:= (\sqrt{\rho}\ox\1)\ket{\Phi} = \left(\1\ox\sqrt{\rho}^T\right)\ket{\Phi}, \\
  \ket{\psi}    &:= (\sqrt{\sigma}\ox\1)\ket{\Phi} = \left(\1\ox\sqrt{\sigma}^T\right)\ket{\Phi},
\end{align*}
the claimed properties of which can be readily checked.

To obtain $\ket{\vartheta}$, we use once more Eq.~(\ref{eq:decompositions})
from the proof of Lemma~\ref{lemma:alicki-fannes-new}:
\begin{equation*}
  \omega = \frac{1}{1+\epsilon}\sigma + \frac{\epsilon}{1+\epsilon}\Delta
         = \frac{1}{1+\epsilon}\rho + \frac{\epsilon}{1+\epsilon}\Delta',
\end{equation*}
with states $\Delta$ and $\Delta'$. Then define
\[\begin{split}
  \ket{\vartheta} &:= \left( \frac{1}{\sqrt{1+\epsilon}}\rho^{1/2}\omega^{-1/2}\sigma^{1/2} \ox \1 \right)\ket{\Phi} \\
                  &=  \left( X \ox \sqrt{\sigma}^T \right)\ket{\Phi} 
                   =  \left( X \ox \1 \right)\ket{\psi}             \\
                  &=  \left( \sqrt{\rho} \ox Y \right)\ket{\Phi} 
                   =  \left( \1 \ox Y \right)\ket{\varphi},
\end{split}\]
using $(Z\ox\1)\ket{\Phi} = (\1\ox Z^T)\ket{\Phi}$, with
\begin{align*}
  X &= \frac{1}{\sqrt{1+\epsilon}}\rho^{1/2}\omega^{-1/2},         \\
  Y &= \frac{1}{\sqrt{1+\epsilon}}(\sigma^T)^{1/2}(\omega^T)^{-1/2}.
\end{align*}

We claim that $\|X\|,\,\|Y\| \leq 1$. Indeed, $\omega \geq \frac{1}{1+\epsilon}\rho$, so 
\[\begin{split}
  X X^\dagger &=    \frac{1}{1+\epsilon} \sqrt{\rho} \omega^{-1} \sqrt{\rho} \\
              &\leq \frac{1}{1+\epsilon}\sqrt{\rho} \left[(1+\epsilon) \rho^{-1}\right] \sqrt{\rho}\
               =    \1,
\end{split}\]
and similarly $Y Y^\dagger \leq \1$.
From this it follows that
\begin{align*}
  \vartheta^{A_2} = \tr_{A_1} (X^\dagger X \ox \1)\psi &\leq \psi^{A_2} = \sigma^T, \text{ and} \\
  \vartheta^{A_1} = \tr_{A_2} (\1 \ox Y^\dagger Y)\varphi &\leq \varphi^{A_1} = \rho.
\end{align*}

It remains to bound the inner product $|\bra{\psi} \vartheta\rangle|$
(the other one, $|\bra{\varphi} \vartheta\rangle|$, is completely analogous):
\[\begin{split}
  |\bra{\psi} \vartheta\rangle| 
            &= \frac{1}{\sqrt{1+\epsilon}}
                \left| \bra{\Phi}\left(\rho^{1/2}\omega^{-1/2}\sigma^{1/2} \ox \sqrt{\sigma}^T \right)\ket{\Phi} \right| \\
            &= \frac{1}{\sqrt{1+\epsilon}} \left| \tr\sqrt{\rho}\omega^{-1/2}\sigma \right|                              \\
            &= \frac{1}{\sqrt{1+\epsilon}} \left| \tr\sqrt{\rho}\omega^{-1/2}[(1+\epsilon)\omega-\epsilon\Delta] \right| \\
            &= \frac{1}{\sqrt{1+\epsilon}} \left| (1+\epsilon)\tr\sqrt{\rho}\sqrt{\omega} 
                                                          - \epsilon\tr\sqrt{\rho}\omega^{-1/2}\Delta \right|       \\
            &\geq \tr\sqrt{\rho}\sqrt{(1+\epsilon)\omega} - \epsilon \bigl| \tr X\Delta \bigr|                \\
            &\geq \tr\sqrt{\rho}\sqrt{\rho} - \epsilon \|X\|\,\|\Delta\|_1                                     
             \geq 1-\epsilon,
\end{split}\]
where we have first used the definitions of $\ket{\psi}$, $\ket{\vartheta}$
and $\ket{\Phi}$, and then the identity between $\omega$ and $\sigma$; the fifth line
is by triangle inequality, in the sixth we used $(1+\epsilon)\omega \geq \rho$
once more, the operator monotonicity of the square root, and the H\"older
inequality $|\tr X\Delta| \leq \|X\|\,\|\Delta\|_1$;
in the last step we use the fact that both $\rho$ and $\Delta$ are states
and $\|X\| \leq 1$.

Finally, to obtain $\Theta$, we write
\begin{align*}
  \rho     &= \proj{\vartheta}^{A_1} + (1-\bra{\vartheta}\vartheta\rangle)\Delta_1, \\
  \sigma^T &= \proj{\vartheta}^{A_2} + (1-\bra{\vartheta}\vartheta\rangle)\Delta_2,
\end{align*}
with bona fide states $\Delta_1$ and $\Delta_2$. 
It is straightforward to check that the definition
\[
  \Theta := \proj{\vartheta} + (1-\bra{\vartheta}\vartheta\rangle)\Delta_1\ox\Delta_2
\]
satisfies all requirements on $\Theta$.
\end{proof}

\medskip
\begin{remark}
\normalfont
Although the above proof refers to the unnormalized vector $\ket{\Phi}$,
and thus taken literally only makes sense for finite dimensional
Hilbert spaces, the proposition remains true also in the infinite
dimensional (separable) case. This can be seen either by finite
dimensional approximation, or by considering $\ket{\Phi}$ as a
formal device to mediate between normalized entangled vectors 
($\ket{\varphi}$, $\ket{\psi}$, $\ket{\vartheta}$, etc) and Hilbert-Schmidt 
class operators ($\sqrt{\rho}$, $\sqrt{\sigma}$, $\rho^{1/2}\omega^{-1/2}\sigma^{1/2}$, etc).
\end{remark}

\section{Relative entropy distances}
\label{sec:relative}
The same method employed in Lemma~\ref{lemma:alicki-fannes-new} can be 
used to derive asymptotic continuity bounds for the 
relative entropy distance with respect to any closed convex set 
$C$ of states, or more generally positive semidefinite operators, 
on a Hilbert space $A$, cf.~\cite{Synak-RadtkeHorodecki}),
\begin{equation}
  \label{eq:defi-DC}
  D_C(\rho) = \min_{\gamma\in C} D(\rho\|\gamma).
\end{equation}
Unlike~\cite{Synak-RadtkeHorodecki}, $C$ has to contain only 
at least one full-rank state, so that $D_C$ is guaranteed to be finite;
in addition, $C$ should be bounded, so that $D_C$ is bounded from below.
We recover the conditional entropy $S(A|B)_\rho$ for a bipartite
state $\rho$ on $A\ox B$, as $D_C(\rho)$ with
\[
  C = \big\{ \1^A\ox\sigma^B : \sigma \text{ a state on } B \bigr\}.
\]

\begin{lemma}
  \label{lemma:D_C}
  For a closed, convex and bounded set $C$ of positive semidefinite operators, containing at
  least one of full rank, let
  \[
    \kappa := \sup_{\tau,\tau'} D_C(\tau) - D_C(\tau')
  \]
  be the largest variation of $D_C$. Then, for any two states $\rho$ and $\sigma$
  with $\frac12 \|\rho-\sigma\|_1 \leq \epsilon$,
  \begin{equation}
    \label{eq:DC-continuity}
    |D_C(\rho)-D_C(\sigma)| \leq \epsilon\,\kappa + (1+\epsilon) \,h\!\left(\!\frac{\epsilon}{1+\epsilon}\!\right).
  \end{equation}
\end{lemma}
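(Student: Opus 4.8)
The plan is to mirror the proof of Lemma~\ref{lemma:alicki-fannes-new} almost verbatim, with $D_C$ taking the place of the quantity $-S(A|B)$. This works because $D_C$ shares the two structural features that drove that argument: it is \emph{convex}, and it is \emph{not too convex} in a quantitative sense. First I would record convexity of $D_C$. Since the quantum relative entropy $(\rho,\gamma)\mapsto D(\rho\|\gamma)$ is jointly convex and $C$ is convex, the partial minimization $D_C(\rho)=\min_{\gamma\in C}D(\rho\|\gamma)$ is convex in $\rho$; this will play exactly the role that concavity of the conditional entropy (the lower bound in Eq.~(\ref{eq:concavity-bounds})) played before.

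The one genuinely new ingredient is the analogue of the upper, ``not too concave'' bound in Eq.~(\ref{eq:concavity-bounds}). I claim that for any convex combination $\omega=\sum_i p_i\rho_i$ one has $D_C(\omega)\geq \sum_i p_i D_C(\rho_i)-H(p)$. To prove this I would start from the compensation identity $\sum_i p_i D(\rho_i\|\gamma)=D(\omega\|\gamma)+\sum_i p_i D(\rho_i\|\omega)$, which holds for every $\gamma$ and follows by writing out the relative entropies. Choosing $\gamma$ optimal for $\omega$ and using $D(\rho_i\|\gamma)\geq D_C(\rho_i)$ (as $\gamma\in C$ is feasible, though not optimal, for each $\rho_i$) gives $D_C(\omega)\geq \sum_i p_i D_C(\rho_i)-\sum_i p_i D(\rho_i\|\omega)$. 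The correction term is precisely the Holevo quantity $\sum_i p_i D(\rho_i\|\omega)=S(\omega)-\sum_i p_i S(\rho_i)$, which is bounded above by $H(p)$ by the right-hand inequality of Eq.~(\ref{eq:concavity-bounds}). This is the step I expect to be the crux; the remainder is bookkeeping.

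With these two facts I would reuse the decomposition of Eq.~(\ref{eq:decompositions}) unchanged. Assuming w.l.o.g.~$\frac12\|\rho-\sigma\|_1=\epsilon$ (the right-hand side of Eq.~(\ref{eq:DC-continuity}) being monotone in $\epsilon$), form the states $\Delta,\Delta'$ and $\omega=\frac{1}{1+\epsilon}\sigma+\frac{\epsilon}{1+\epsilon}\Delta=\frac{1}{1+\epsilon}\rho+\frac{\epsilon}{1+\epsilon}\Delta'$. Convexity applied to the first representation gives $D_C(\omega)\leq \frac{1}{1+\epsilon}D_C(\sigma)+\frac{\epsilon}{1+\epsilon}D_C(\Delta)$, while the not-too-convex bound applied to the second gives $D_C(\omega)\geq \frac{1}{1+\epsilon}D_C(\rho)+\frac{\epsilon}{1+\epsilon}D_C(\Delta')-h\!\left(\frac{\epsilon}{1+\epsilon}\right)$, the binary entropy of the weights $\left(\frac{1}{1+\epsilon},\frac{\epsilon}{1+\epsilon}\right)$ being exactly $h\!\left(\frac{\epsilon}{1+\epsilon}\right)$.

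Finally I would chain the two inequalities and multiply through by $1+\epsilon$, obtaining $D_C(\rho)-D_C(\sigma)\leq \epsilon\bigl(D_C(\Delta)-D_C(\Delta')\bigr)+(1+\epsilon)h\!\left(\frac{\epsilon}{1+\epsilon}\right)$. Bounding $D_C(\Delta)-D_C(\Delta')\leq\kappa$ by the definition of $\kappa$ yields one direction of Eq.~(\ref{eq:DC-continuity}), and exchanging the roles of $\rho$ and $\sigma$ gives the other, hence the absolute value. The only place finiteness of $\kappa$ enters is this last bound, which is exactly why $C$ is required to contain a full-rank state (making $D_C$ finite from above) and to be bounded (making $D_C$ bounded from below).
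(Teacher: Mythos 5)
Your proof is correct and follows essentially the same route as the paper: the same decomposition $\omega$, $\Delta$, $\Delta'$ of Eq.~(\ref{eq:decompositions}), convexity of $D_C$ on the $\sigma$-side, and a quantitative ``not too convex'' lower bound on the $\rho$-side. Your compensation-identity derivation of that lower bound is just a repackaging of the paper's computation, which expands $D(\omega\|\gamma)=-S(\omega)-\tr\omega\log\gamma$ for the optimal $\gamma\in C$ and applies the concavity upper bound of Eq.~(\ref{eq:concavity-bounds}) together with linearity of $\tr(\cdot)\log\gamma$ --- the correction term $\sum_i p_i D(\rho_i\|\omega)\leq H(p)$ you isolate is exactly the same entropy estimate.
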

\begin{proof}
The only modification with respect to the proof of Lemma~\ref{lemma:alicki-fannes-new} 
is that we replace the invocation of concavity
of the conditional entropy with the joint convexity of the relative
entropy, which makes $D_C$ a convex functional.

Namely, with $\omega$ as in Eq.~(\ref{eq:decompositions}),
we have on the one hand,
\[
  D_C(\omega) \leq \frac{1}{1+\epsilon} D_C(\sigma) + \frac{\epsilon}{1+\epsilon} D_C(\Delta).
\]
On the other hand, with an optimal $\gamma\in C$,
\[\begin{split}
  D_C(\omega) &=      D\bigl(\omega\|\gamma\bigr)         \\
              &=    - S(\omega) - \tr\omega\log\gamma               \\
              &\geq - h\!\left(\!\frac{\epsilon}{1+\epsilon}\!\right)
                    - \frac{1}{1+\epsilon}S(\rho) - \frac{\epsilon}{1+\epsilon}S(\Delta') \\
              &\phantom{-h\!\left(\!\frac{\epsilon}{1+\epsilon}\!\right)}
                    - \frac{1}{1+\epsilon}\tr\rho\log\gamma - \frac{\epsilon}{1+\epsilon}\tr\Delta'\log\gamma \\
              &=    - h\!\left(\!\frac{\epsilon}{1+\epsilon}\!\right)
                    + \frac{1}{1+\epsilon}D(\rho\|\gamma) + \frac{\epsilon}{1+\epsilon}D(\Delta'\|\gamma) \\
              &\geq - h\!\left(\!\frac{\epsilon}{1+\epsilon}\!\right)
                       + \frac{1}{1+\epsilon} D_C(\rho) + \frac{\epsilon}{1+\epsilon} D_C(\Delta').
\end{split}\]
Putting these two inequalities together yields the claim of the lemma.
\end{proof}

\medskip
In particular, in the case that
\[\begin{split}
  C &= \text{SEP}(A:B) \\
    &:= \operatorname{conv}\{\alpha^A\ox\beta^B : \alpha,\,\beta \text{ states on }A,\,B,\text{ resp.} \}
\end{split}\]
is the set of separable states, we obtain the
relative entropy of entanglement of a state $\rho$ on bipartite system $A\ox B$, 
$E_R(\rho) = D_{\text{SEP}(A:B)}(\rho)$~\cite{relent}. Furthermore,
we consider its regularization
\[
  E_R^\infty(\rho) = \lim_{n\rightarrow\infty} \frac1n E_R(\rho^{\ox n}),
\]
which is known to be different from $E_R(\rho)$ in general~\cite{VollbrechtWerner}.

\begin{corollary}
  {\bf (Cf.~Donald/Horodecki~\cite{DonaldHorodecki} \&{} Christandl~\cite{Christandl:PhD})}
  \label{cor:E_R}
  For any two states $\rho$ and $\sigma$ on the composite system $A\ox B$,
  denoting the smaller of the dimensions $|A|$, $|B|$ by $d$,
  $\frac12 \|\rho-\sigma\|_1 \leq \epsilon$ implies
  \begin{align*}
    |E_R(\rho)-E_R(\sigma)| &\leq \epsilon \log d + (1+\epsilon) \,h\!\left(\!\frac{\epsilon}{1+\epsilon}\!\right), \\
    |E_R^\infty(\rho)-E_R^\infty(\sigma)| 
                            &\leq \epsilon \log d + (1+\epsilon) \,h\!\left(\!\frac{\epsilon}{1+\epsilon}\!\right).
  \end{align*}
\end{corollary}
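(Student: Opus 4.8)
The plan is to establish the two inequalities separately: the first as a direct instance of Lemma~\ref{lemma:D_C}, the second by an additional telescoping argument modelled on Eq.~(\ref{eq:chain-trick}).

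For $E_R = D_{\text{SEP}(A:B)}$, I would apply Lemma~\ref{lemma:D_C} with $C = \text{SEP}(A:B)$, which is closed, convex, bounded, and contains the full-rank separable state $\1/(|A|\,|B|)$, so the lemma is applicable. It then remains only to check that the largest variation satisfies $\kappa \le \log d$. On the one hand $E_R \ge 0$, since the relative entropy of two states is nonnegative. On the other hand $E_R(\tau) \le \log d$ for every state $\tau$ on $A\ox B$: indeed $E_R \le E_F$ (apply joint convexity of the relative entropy to an optimal pure-state decomposition, using that $E_R$ coincides with the entropy of entanglement on pure states), and $E_F(\tau) \le \log\min(|A|,|B|) = \log d$ because the entropy of entanglement of a pure bipartite state is at most the log of the smaller dimension. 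Hence $0 \le E_R \le \log d$, so $\kappa \le \log d$, and substituting into Eq.~(\ref{eq:DC-continuity}) gives the first inequality.

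For $E_R^\infty$ I would reuse the telescoping chain (\ref{eq:chain-trick}), with $\Omega_t = \rho^{\ox t-1}\ox\sigma^{\ox n-t}$, so that the whole problem is reduced to a single per-copy estimate
\[
  |E_R(\rho\ox\Omega) - E_R(\sigma\ox\Omega)| \le \epsilon\log d + (1+\epsilon)\,h\!\left(\!\frac{\epsilon}{1+\epsilon}\!\right),
\]
holding for an \emph{arbitrary} auxiliary state $\Omega$ on a second bipartite system $A'\ox B'$, with $d$ still the smaller of $|A|,|B|$. Summing the $n$ telescoped terms, dividing by $n$, and letting $n\to\infty$ then yields the second inequality. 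It is essential that the constant on the right be $\log d$ and not, say, $\log(d\,|A'|)$: a direct application of Lemma~\ref{lemma:D_C} to $\rho^{\ox n}$ and $\sigma^{\ox n}$ is useless because $\frac12\|\rho^{\ox n}-\sigma^{\ox n}\|_1 \to 1$, and a termwise application carrying the ambient dimension $d^n$ would contribute a constant growing with $n$ that survives the division.

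To prove the per-copy estimate I would re-run the proof of Lemma~\ref{lemma:D_C} verbatim with $\rho\ox\Omega$ and $\sigma\ox\Omega$ replacing $\rho$ and $\sigma$. Since $\Omega\ge0$, the positive part factorizes as $\bigl((\rho-\sigma)\ox\Omega\bigr)_+ = \epsilon\,\Delta\ox\Omega$, and all the auxiliary operators $\omega,\Delta,\Delta'$ of that proof simply acquire the tensor factor $\Omega$. The argument then delivers $E_R(\rho\ox\Omega)-E_R(\sigma\ox\Omega) \le \epsilon\bigl(E_R(\Delta\ox\Omega)-E_R(\Delta'\ox\Omega)\bigr) + (1+\epsilon)\,h\!\left(\!\frac{\epsilon}{1+\epsilon}\!\right)$, and the bracketed residual is tamed not by the ambient dimension but by two standard properties of the relative entropy of entanglement: subadditivity $E_R(\Delta\ox\Omega) \le E_R(\Delta)+E_R(\Omega)$ (using that $\gamma_\Delta\ox\gamma_\Omega$ is separable across $(AA'):(BB')$ for optimal separable $\gamma_\Delta,\gamma_\Omega$), and monotonicity $E_R(\Delta'\ox\Omega) \ge E_R(\Omega)$ under the local partial trace over $AB$. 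These combine to $E_R(\Delta\ox\Omega)-E_R(\Delta'\ox\Omega) \le E_R(\Delta) \le \log d$, and the reverse direction follows by interchanging $\rho$ and $\sigma$.

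The main obstacle is exactly this $\Omega$-independence. The cancellation $E_R(\Delta\ox\Omega)-E_R(\Delta'\ox\Omega)\le E_R(\Delta)$, in which the possibly large terms $E_R(\Omega)$ on both sides annihilate one another, is what keeps the per-copy constant at $\log d$ and thereby lets the telescoped bound survive division by $n$; everything else is a mechanical transcription of the proofs of Lemmas~\ref{lemma:D_C} and~\ref{lemma:alicki-fannes-new}, so that, pleasingly, no analogue of the quantum coupling of Proposition~\ref{prop:quantum-coupling} is needed here and the regularized bound comes out identical to the single-copy one.
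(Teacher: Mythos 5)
Your proof is correct and takes essentially the same route as the paper: the single-letter bound is the same direct application of Lemma~\ref{lemma:D_C} with $C=\mathrm{SEP}(A:B)$ and $\kappa\le\log d$, and the regularized bound uses the identical telescoping chain plus the key observation that the auxiliary operators $\Delta,\Delta'$ factorize as $\tau\ox\Omega_t$, so only the variation of $E_R$ over such product states matters. The one immaterial difference is how that variation is capped at $\log d$: you use subadditivity $E_R(\Delta\ox\Omega)\le E_R(\Delta)+E_R(\Omega)$ together with $E_R\le E_F\le\log d$ and monotonicity under partial trace, while the paper bounds $E_R(\tau\ox\Omega_t)\le E_R(\Phi_d\ox\Omega_t)\le\log d+E_R(\Omega_t)$ via LOCC monotonicity from the maximally entangled state.
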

Note that this bound only depends on the smaller of the two
dimensions, in contrast to~\cite{DonaldHorodecki}; in particular,
it applies even in the case that one of the two Hilbert spaces is infinite
dimensional.

\medskip
\begin{proof}
The first bound, on the single-letter $E_R$ is a direct application of Lemma~\ref{lemma:D_C}
to the case where $C$ is the set of all separable states on $A\ox B$.

For the regularization, consider any integer $n$ and
\[\begin{split}
  \Big| E_R&\bigl(\rho^{\ox n}\bigr) - E_R\bigl(\sigma^{\ox n}\bigr) \Big| \\
       &=     \left| \sum_{t=1}^n E_R\bigl(\rho^{\ox t} \ox \sigma^{\ox n-t}\bigr)
                                  - E_R\bigl(\rho^{\ox t-1} \ox \sigma^{\ox n-t+1}\bigr) \right| \\
       &\leq \sum_{t=1}^n | E_R(\rho \ox \Omega_t) - E_R(\sigma \ox \Omega_t) |,
\end{split}\]
with $\Omega_t = \rho^{\ox t-1} \ox \sigma^{\ox n-t}$. Now for each $t$,
Lemma~\ref{lemma:D_C} gives
\[
  | E_R(\rho \ox \Omega_t) - E_R(\sigma \ox \Omega_t) | 
                 \leq \epsilon\kappa_t + (1+\epsilon) \,h\!\left(\!\frac{\epsilon}{1+\epsilon}\!\right),
\]
with $\kappa_t = \sup_{\tau,\tau'} \bigl( E_R(\tau\ox\Omega_t) - E_R(\tau'\ox\Omega_t) \bigr)$.
To see this, we have to look into the proof of the lemma, and observe that for
states $\rho\ox\Omega_t$ and $\sigma\ox\Omega_t$, also the auxiliary
operators $\Delta$ and $\Delta'$ are of the form $\tau\ox\Omega_t$
and $\tau'\ox\Omega_t$. However, by LOCC monotonicity,
\[
  E_R(\tau' \ox \Omega_t) \geq E_R(\Omega_t), 
\]
and similarly
\[
  E_R(\tau \ox \Omega_t) \leq E_R(\Phi_d \ox \Omega_t) \leq \log d + E_R(\Omega_t),
\]
so that $\kappa_t \leq \log d$. Although we do not need it, the right
hand inequality is in fact an equality,
$E_R(\Phi_d \ox \Omega_t) = \log d + E_R(\Omega_t)$~\cite{extraweak}.
Thus, we obtain for all $n$,
\[
  \left| \frac1n E_R\bigl(\rho^{\ox n}\bigr) - \frac1n E_R\bigl(\sigma^{\ox n}\bigr) \right| \\
                       \leq \epsilon \log d + (1+\epsilon) \,h\!\left(\!\frac{\epsilon}{1+\epsilon}\!\right),
\]
and taking the limit $n\rightarrow\infty$ concludes the proof.
\end{proof}

\medskip
Again, in Lemma~\ref{lemma:D_C} and Corollary~\ref{cor:E_R}, the constant in 
the linear term (proportional to $\epsilon$) is essentially best possible, as we see by taking 
two states maximizing the difference $D_C(\rho)-D_C(\sigma)$, i.e.~attaining
$\kappa$, since $\frac12 \|\rho-\sigma\|_1 \leq 1 =:\epsilon$.

\begin{remark}
\normalfont
Lemma~\ref{lemma:D_C} improves upon similar-looking general bounds by
Synak-Radtke and Horodecki~\cite{Synak-RadtkeHorodecki}, which were 
subsequently optimized by Mosonyi and Hiai~\cite[Prop.~VI.1]{MosonyiHiai}. 
The latter paper also explains lucidly (in Sec.~VI) that the coefficient 
$\frac{1}{1+\epsilon}$ in the convex decomposition of $\omega$ 
in two ways, into $\rho$ and $\Delta'$ and into $\sigma$ and $\Delta$, 
is optimal, and gives a nice geometric interpretation of $\omega$ as 
a $\max$-relative entropy center of $\rho$ and $\sigma$ (cf.~\cite{Kimura-et-al}).
Thus, at least following the same strategy one cannot improve the bound any more.

That the regularized relative entropy measure $E_R^\infty$ is asymptotically 
continuous followed previously from its non-lockability~\cite{HHHO-lock},
which it inherits from $E_R$. 
This has been worked out in~\cite[Prop.~13]{antisymm}, 
following~\cite[Prop.~3.23]{Christandl:PhD}, with a different linear term.
\end{remark}

\begin{remark}
\normalfont
It would be interesting to lift the restriction that $C$ has to be
a convex set: Natural examples are the case that $C$ is the set of all
product states in a bipartite (multipartite) system, in which case
$D_C$ becomes the quantum mutual information (multi-information);
or the case that $C$ is the closure of the set of all Gibbs states for a suitable
Hamiltonian operator $H$,
\[
  C = \overline{\left\{ \frac{1}{\tr e^{-\beta H}} e^{-\beta H} : \beta > 0 \right\}}.
\]

Both examples have in common that $C$ is an exponential family (or the closure
of one); it is known that at least in some cases $D_C$ is continuous, but 
counterexamples of discontinuous behaviour are known~\cite{WeisKnauf}.
\end{remark}

\section{Bounded energy}
\label{sec:bounded}
If the Hilbert space in the Fannes inequality (Lemma~\ref{lemma:fannes-audenaert})
has infinite dimension, or likewise $A$ in the Alicki-Fannes inequality
(Lemma~\ref{lemma:alicki-fannes-new}), then the bound becomes trivial:
the right hand side is infinite. This is completely natural, since the
entropy is not even continuous, and these Fannes-type bounds 
imply a sort of uniform continuity. Continuity is restored, however, when
restricting to states of finite energy, for instance of a quantum harmonic
oscillator~\cite{Wehrl}, see also~\cite{ESP} and~\cite{Shirokov:S-continuity}
for more recent results and excellent surveys on the status of continuity of 
the entropy. Shirokov~\cite{Shirokov:I-continuity} has developed an 
approach do prove (local) continuity of entropic quantities, based on certain 
finite entropy assumptions, in which he uses Alicki-Fannes inequalities
on finite approximations.

Uniform bounds are still out of the question, but what we shall show here 
is that the Fannes and Alicki-Fannes inequalities discussed above have 
satisfying analogues, with a dependence on the energy of 
the states rather than the Hilbert space dimension.

Abstractly, our setting is this:
Consider a Hamiltonian $H$ on a infinite dimensional 
separable Hilbert space $A$. 
If there is another system $B$ and we consider bipartite states
and conditional entropy, we implicitly assume trivial Hamiltonian
on $B$, i.e.~global Hamiltonian $H = H^A\ox\1^B$.
We shall need a number of assumptions on $H$, to start with that 
it has discrete spectrum and that it is bounded from below; for normalization
purposes we fix the ground state energy of $H$ to be $0$. The mathematically
precise assumption is the following.

\medskip\noindent
{\bf Gibbs Hypothesis.} For every $\beta > 0$, let the partition function
$Z(\beta) := \tr e^{-\beta H} < \infty$ be finite, 
so that $\frac{1}{Z(\beta)} e^{-\beta H}$ is a bona fide state
with finite entropy. In this case, for every energy $E$ in the spectrum of $H$,
the (unique) maximizer of the entropy $S(\rho)$ subject to $\tr\rho H\leq E$
is of the Gibbs form:
\[
  \gamma(E) = \frac{1}{Z(\beta(E))} e^{-\beta(E) H},
\]
where $\beta=\beta(E)$ is decreasing with $E$ and is the solution to the equation
\[
  \tr e^{-\beta H}(H-E) = 0.
\]
The entropy in this case is given by
\[
  S\bigl(\gamma(E)\bigr) = \log Z + \beta(E)(\log e)E.
\]
This implies that the spectrum is unbounded above, and that the energy
levels cannot become ``too dense'' with growing energy value.

Let us immediately draw some conclusions from these assumptions;
the following is a simply consequence of Shirokov's~\cite[Prop.~1]{Shirokov:Entropy}, 
for which we present an elementary proof.
\begin{proposition}
  \label{prop:concavity}
  For a Hamiltonian $H$ satisfying the Gibbs Hypothesis, $S\bigl(\gamma(E)\bigr)$ is
  a strictly increasing, strictly concave function of the energy $E$.
\end{proposition}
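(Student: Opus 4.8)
The plan is to treat $S(\gamma(E))$ as an explicit function of $E$ through the parametrization by $\beta$, and to verify monotonicity and concavity by computing derivatives. Using the closed form stated in the Gibbs Hypothesis, $S(\gamma(E)) = \log Z(\beta) + \beta (\log e) E$, where $\beta = \beta(E)$ is determined implicitly by $\tr e^{-\beta H}(H-E)=0$, i.e.~by $E = \avg{H}_\beta$ with $\avg{H}_\beta := \frac{1}{Z(\beta)}\tr H e^{-\beta H}$. The key observation is that $E\mapsto\beta$ is a smooth bijection (from the Gibbs Hypothesis, $\beta$ is decreasing in $E$), so I can differentiate everything with respect to $\beta$ and then convert to derivatives in $E$ via the chain rule.

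First I would record the standard thermodynamic identities. Writing $U(\beta) := \avg{H}_\beta = -\frac{d}{d\beta}\log Z(\beta)$, one has $\frac{dU}{d\beta} = -\var_\beta(H) < 0$, where $\var_\beta(H) = \avg{H^2}_\beta - \avg{H}_\beta^2$ is the variance of the energy in the Gibbs state $\gamma$, which is strictly positive because $H$ is not constant (its spectrum is unbounded). Since $E = U(\beta)$, this gives $\frac{dE}{d\beta} = -\var_\beta(H) < 0$, confirming $\beta(E)$ is strictly decreasing and, in particular, that the map is a local diffeomorphism everywhere.

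Next I would compute $\frac{d}{d\beta} S(\gamma)$. Differentiating $S = \log Z + \beta(\log e) E$ in $\beta$, using $\frac{d}{d\beta}\log Z = -E$ and $\frac{dE}{d\beta} = -\var_\beta(H)$, the terms combine to give $\frac{dS}{d\beta} = (\log e)\,\beta\,\frac{dE}{d\beta} = -(\log e)\,\beta\,\var_\beta(H)$. Dividing by $\frac{dE}{d\beta} = -\var_\beta(H)$ then yields the clean expression
\[
  \frac{dS}{dE} = (\log e)\,\beta(E) > 0,
\]
which is both intuitive (it is the usual thermodynamic relation $\frac{\partial S}{\partial E} = \frac{1}{T}$ up to the $\log e$ factor from working in bits) and immediately proves strict monotonicity, since $\beta(E) > 0$ for all finite $E$.

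Finally, concavity follows by differentiating once more: $\frac{d^2 S}{dE^2} = (\log e)\,\frac{d\beta}{dE}$, and since $\beta(E)$ is strictly decreasing we have $\frac{d\beta}{dE} = -\frac{1}{\var_\beta(H)} < 0$, so $\frac{d^2 S}{dE^2} < 0$ strictly. I expect the main obstacle to be not the calculus but the justification of differentiating under the trace and the claim $\frac{d}{d\beta}\log Z = -\avg{H}_\beta$ in the infinite-dimensional setting. This is where the finiteness of $Z(\beta)$ for all $\beta>0$ in the Gibbs Hypothesis does the real work: it guarantees that all moments $\avg{H^k}_\beta$ are finite and that the relevant sums converge uniformly on compact $\beta$-intervals, legitimizing termwise differentiation of $Z(\beta) = \sum_n e^{-\beta E_n}$. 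Once that analytic point is secured, the derivative computations above are elementary and the two sign conditions, $\frac{dS}{dE}>0$ and $\frac{d^2S}{dE^2}<0$, give exactly the strict monotonicity and strict concavity claimed.
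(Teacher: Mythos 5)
Your proof is correct, but it takes a genuinely different route from the paper. The paper argues without any calculus: monotonicity follows immediately from the maximum-entropy characterization (the constraint set $\{\rho : \tr\rho H \leq E\}$ grows with $E$); concavity follows by feeding the mixture $p\gamma(E_1)+(1-p)\gamma(E_2)$, whose energy is at most $pE_1+(1-p)E_2$, back into the max-entropy principle together with concavity of the von Neumann entropy; strictness of the increase is then extracted from unboundedness of $S(\gamma(E))$ plus concavity (a concave function flat on an interval cannot increase afterwards), and strict concavity from $\gamma(E_1)\neq\gamma(E_2)$ together with \emph{strict} concavity of $S$. Your thermodynamic-derivative argument instead yields the quantitative identities $\frac{dS}{dE} = (\log e)\,\beta(E)$ and $\frac{d^2S}{dE^2} = -(\log e)/\var_\beta(H)$, which are strictly stronger than the qualitative statement and make the strictness claims transparent; the price is exactly the analytic overhead you identify, namely justifying termwise differentiation of $Z(\beta)=\sum_n e^{-\beta E_n}$ (your appeal to finiteness of $Z$ on all of $(0,\infty)$, which dominates $E_n^k e^{-\beta E_n}$ on compact $\beta$-intervals, is sound), plus strict positivity of $\var_\beta(H)$ (fine, since $\gamma$ has full support and $H$ is non-constant) and the bijectivity of $E\leftrightarrow\beta$, whose surjectivity you are implicitly taking from the Gibbs Hypothesis rather than proving. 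Two small points: with the paper's base-$2$ convention your intermediate identity should read $\frac{d}{d\beta}\log Z = -(\log e)E$ (your combined formula $\frac{dS}{d\beta}=(\log e)\beta\frac{dE}{d\beta}$ is nevertheless correct), and your argument lives on the open region $E>0$ where $\beta$ is finite, so the endpoint $E=0$ should be swept up by continuity — neither affects validity. The paper's route is more robust (no regularity of $\beta(E)$ needed); yours buys explicit first and second derivatives, i.e.~the relation $\partial S/\partial E = 1/T$ in disguise.
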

\begin{proof}
It is clear from the maximum entropy characterization of $\gamma(E)$
that the entropy as a function of $E$ must be non-decreasing; it is
unbounded by looking at the formula for the entropy in terms of $\log Z$.

Furthermore, for energies $E_1$ and $E_2$, and $0\leq p \leq 1$,
\[
  \tr\bigl(p\gamma(E_1)+(1-p)\gamma(E_2)\bigr)H \leq p E_1 + (1-p) E_2 =:E,
\]
and so concavity follows:
\begin{equation}\begin{split}
  \label{eq:concave}
  S\bigl(\gamma(E)\bigr) &\geq S\bigl(p\gamma(E_1)+(1-p)\gamma(E_2)\bigr) \\
                         &\geq p S\bigl(\gamma(E_1)\bigr) + (1-p) S\bigl(\gamma(E_2)\bigr).
\end{split}\end{equation}

From this it follows that $S\bigl(\gamma(E)\bigr)$ is strictly increasing,
because otherwise $S\bigl(\gamma(E_1)\bigr) = S\bigl(\gamma(E_2)\bigr)$
for some $E_1 < E_2$, but then $S\bigl(\gamma(E_2)\bigr) < S\bigl(\gamma(E_3)\bigr)$
for some $E_2 < E_3$, since the entropy grows to infinity as $E\rightarrow\infty$,
contradicting concavity.

But this means that for $E_1 \neq E_2$, necessarily $\gamma(E_1) \neq \gamma(E_2)$,
and so by the strict concavity of the von Neumann entropy, we have strict
inequality in the second line of Eq.~(\ref{eq:concave}) for $0<p<1$.
\end{proof}

\begin{corollary}
  \label{cor:bound}
  If $H$ satisfies the Gibbs Hypothesis, then for any $\delta>0$,
  \[
    \sup_{0<\lambda\leq\delta} \lambda\, S\bigl(\gamma(E/\lambda)\bigr) = \delta\, S\bigl(\gamma(E/\delta)\bigr).
  \]
\end{corollary}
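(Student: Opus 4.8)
The plan is to reduce the statement to an elementary monotonicity fact about concave functions. Write $g(E) := S\bigl(\gamma(E)\bigr)$. Proposition~\ref{prop:concavity} tells us that $g$ is strictly increasing and strictly concave; moreover, being a von Neumann entropy, $g(E) \geq 0$ for every $E$, so in particular the limit $g(0^+) := \lim_{E\to 0^+} g(E)$, which exists for a concave function (possibly $+\infty$), is non-negative. The quantity to be maximised is $f(\lambda) := \lambda\,g(E/\lambda)$ on $\lambda\in(0,\delta]$, and I claim the entire corollary follows once we know that $t \mapsto g(t)/t$ is non-increasing on $(0,\infty)$.

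First I would establish this ratio monotonicity. For $0 < t_1 < t_2$ and any $s$ with $0 < s < t_1$, writing $t_1$ as the convex combination $t_1 = \frac{t_1-s}{t_2-s}\,t_2 + \frac{t_2-t_1}{t_2-s}\,s$ and applying concavity of $g$ gives $g(t_1) \geq \frac{t_2-t_1}{t_2-s}\,g(s) + \frac{t_1-s}{t_2-s}\,g(t_2)$. Letting $s \to 0^+$ and using $g(0^+)\geq 0$ (so the first term stays non-negative), this yields $g(t_1) \geq \frac{t_1}{t_2}\,g(t_2)$, that is $g(t_1)/t_1 \geq g(t_2)/t_2$. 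Hence $\phi(t) := g(t)/t$ is non-increasing (in fact strictly decreasing, by the strict concavity, though only the weak statement is needed).

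Then I would simply rewrite $f(\lambda) = E\,\dfrac{g(E/\lambda)}{E/\lambda} = E\,\phi(E/\lambda)$. As $\lambda$ ranges over $(0,\delta]$, the argument $E/\lambda$ ranges over $[E/\delta,\infty)$, attaining its minimum exactly at $\lambda=\delta$; since $\phi$ is non-increasing, $\phi(E/\lambda)$ is therefore maximised precisely at $\lambda=\delta$. Consequently the supremum is attained and equals $E\,\phi(E/\delta) = \delta\,g(E/\delta) = \delta\,S\bigl(\gamma(E/\delta)\bigr)$, which is exactly the claim.

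The argument is essentially routine once Proposition~\ref{prop:concavity} is available; the only point requiring a little care is the behaviour at the open end of the supremum, $\lambda \to 0^+$, equivalently $t=E/\lambda\to\infty$. There $g(E/\lambda)\to\infty$ while $\lambda\to 0$, an indeterminate product, and one must rule out that the supremum is approached at this end or pushed off to infinity. This is exactly what the monotonicity of $\phi$ forbids (it shows $f$ is non-decreasing, so its infimum, not its supremum, lives at $\lambda\to 0^+$), and it is the one place where non-negativity of the entropy, via $g(0^+)\geq 0$, is genuinely used. I therefore expect the main, and quite mild, obstacle to be the clean justification of the $s\to 0^+$ limit and this endpoint bookkeeping, rather than any substantive computation.
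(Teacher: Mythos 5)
Your proof is correct and is essentially the paper's argument: the paper likewise applies the concavity of $E \mapsto S\bigl(\gamma(E)\bigr)$ from Proposition~\ref{prop:concavity} at the point $0$, writing $S\bigl(\gamma(tF)\bigr) \geq t\,S\bigl(\gamma(F)\bigr) + (1-t)\,S\bigl(\gamma(0)\bigr) \geq t\,S\bigl(\gamma(F)\bigr)$ with $t=\lambda/\delta$ and $F=E/\lambda$, which is exactly your ratio-monotonicity of $g(t)/t$ in disguise. Your only deviation is the slightly more careful limit $s\to 0^+$ in place of evaluating at $s=0$ directly, which is a cosmetic refinement rather than a different route.
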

\begin{proof}
The right hand side is clearly attained by letting $\lambda=\delta$. 
To prove ``$\leq$'' for any admissible $\lambda$, observe that
by concavity (Proposition~\ref{prop:concavity}),
\[
  S\bigl(\gamma(tF)\bigr) \geq t\,S\bigl(\gamma(F)\bigr) + (1-t)\,S\bigl(\gamma(0)\bigr)
                          \geq t\,S\bigl(\gamma(F)\bigr).
\]
Letting $t=\frac{\lambda}{\delta} \leq 1$ and $F=\frac{E}{\lambda}$
concludes the proof.
\end{proof}

\medskip
\begin{remark}
  \label{rem:SE-vs-E}
  \normalfont
  Another useful fact proved by Shirokov~\cite[Prop.~1(ii)]{Shirokov:Entropy}, which
  we shall invoke later, is that under our assumptions,
  $S\bigl(\gamma(E)\bigr) = o(E)$, which can be recast as
  saying that $\delta\, S\bigl(\gamma(E/\delta)\bigr) \rightarrow 0$
  for every finite $E$ and $\delta\rightarrow 0$. 
  \qed
\end{remark}

\medskip
We start with an easy-to-prove continuity bound for the entropy,
inspired by the proof of Lemma~\ref{lemma:fannes-audenaert},
though for the conditional entropy we shall have to resort to a
different argument.
It uses a quantum coupling as in Proposition~\ref{prop:quantum-coupling}
(which implies a weaker bound in the following, with the square 
of the expression on the right hand side).

\begin{proposition}
  \label{prop:almost-purification}
  Let $\rho$ and $\sigma$ be states on the same Hilbert space $A$, and
  consider the tensor square $A\ox A =: A_1A_2$ of the quantum system.
  Then, there exists a state $\omega$ with
  $\omega^{A_1} = \rho$, $\omega^{A_2} = \sigma$ and such that
  \[
    \| \omega \|_\infty \geq 1-\frac12 \|\rho-\sigma\|_1.
  \]
\end{proposition}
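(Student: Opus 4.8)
The plan is to reduce the existence of $\omega$ to producing a single subnormalized vector $\ket\chi$ on $A_1A_2$ whose two reduced operators lie below the target marginals and whose squared norm is at least $1-\epsilon$, and then to read off a large eigenvalue of $\omega$ directly from that vector. Writing $\epsilon=\frac12\|\rho-\sigma\|_1$, suppose (for a generic target pair, with an arbitrary state in the second slot) that we have $\ket\chi$ with $\chi^{A_1}:=\tr_{A_2}\proj\chi\le\rho$, $\chi^{A_2}:=\tr_{A_1}\proj\chi\le\sigma$, and $\braket\chi\chi\ge1-\epsilon$. Then the operator $\omega:=\proj\chi+\frac1\eta(\rho-\chi^{A_1})\ox(\sigma-\chi^{A_2})$, with $\eta:=1-\braket\chi\chi=\tr(\rho-\chi^{A_1})=\tr(\sigma-\chi^{A_2})$, is manifestly positive, has the correct marginals $\omega^{A_1}=\rho$ and $\omega^{A_2}=\sigma$, and obeys $\omega\ge\proj\chi$, whence $\|\omega\|_\infty\ge\braket\chi\chi\ge1-\epsilon$. (If $\eta=0$ one simply takes $\omega=\proj\chi$.)

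For the vector itself I would reuse the coupling built in the proof of Proposition~\ref{prop:quantum-coupling}. With the max-relative-entropy center $\omega_0=\frac1{1+\epsilon}\rho+\frac\epsilon{1+\epsilon}\Delta'$ of Eq.~(\ref{eq:decompositions}) and $X=\frac1{\sqrt{1+\epsilon}}\rho^{1/2}\omega_0^{-1/2}$, let $\ket\psi=(\sqrt\sigma\ox\1)\ket\Phi$ (so that $\psi^{A_1}=\sigma$) and $\ket\vartheta:=(X\ox\1)\ket\psi$; Proposition~\ref{prop:quantum-coupling} already yields $\vartheta^{A_1}\le\rho$ and $\vartheta^{A_2}\le\sigma^T$. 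The crucial point, where I must improve on the \emph{stated} conclusions of that proposition, is a sharp lower bound on $\braket\vartheta\vartheta$: its inner-product estimates give only $\braket\vartheta\vartheta\ge(1-\epsilon)^2$ through Cauchy--Schwarz, which is precisely the weaker, squared form noted before the statement. Instead I would estimate the norm directly, $\braket\vartheta\vartheta=\tr\vartheta^{A_1}=\tr(X^\dg X\,\sigma)=\frac1{1+\epsilon}\tr\big[\rho\,\omega_0^{-1/2}\sigma\,\omega_0^{-1/2}\big]$, and substitute $\sigma=(1+\epsilon)\omega_0-\epsilon\Delta$ to obtain $\braket\vartheta\vartheta=1-\frac\epsilon{1+\epsilon}\tr\big[\omega_0^{-1/2}\rho\,\omega_0^{-1/2}\Delta\big]$. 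Since $\rho\le(1+\epsilon)\omega_0$ forces $\omega_0^{-1/2}\rho\,\omega_0^{-1/2}\le(1+\epsilon)\1$ and $\Delta$ is a state, the trace is at most $1+\epsilon$, so $\braket\vartheta\vartheta\ge1-\epsilon$, exactly as required.

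There is one genuine wrinkle: the construction controls the second marginal by $\sigma^T$ rather than $\sigma$, so $\ket\vartheta$ satisfies $\chi^{A_2}\le\sigma^T$, not $\chi^{A_2}\le\sigma$. I cannot repair this by a partial transpose on $A_2$, since partial transposition does not preserve the spectrum and would wreck the bound on $\|\cdot\|_\infty$. Instead I would exploit that $\sigma$ and $\sigma^T$ have identical spectra, so there is a unitary $V$ on $A_2$ with $V\sigma^TV^\dg=\sigma$. Applying the reduction of the first paragraph to the pair $(\rho,\sigma^T)$ produces a coupling $\omega'$ with $\omega'^{A_1}=\rho$, $\omega'^{A_2}=\sigma^T$ and $\|\omega'\|_\infty\ge1-\epsilon$, and I then set $\omega:=(\1\ox V)\,\omega'\,(\1\ox V^\dg)$. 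Conjugation by a local unitary preserves the spectrum, so $\|\omega\|_\infty=\|\omega'\|_\infty\ge1-\epsilon$, while $\omega^{A_1}=\rho$ and $\omega^{A_2}=V\sigma^TV^\dg=\sigma$, which finishes the argument.

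The one step demanding real care---the main obstacle---is the sharp norm estimate $\braket\vartheta\vartheta\ge1-\epsilon$: it is tempting to invoke Proposition~\ref{prop:quantum-coupling} as a black box, but its fidelity and inner-product bounds lose a square, and recovering the linear bound means entering that construction and controlling $\tr\big[\omega_0^{-1/2}\rho\,\omega_0^{-1/2}\Delta\big]$ as above. The transpose-versus-unitary bookkeeping is a secondary but essential subtlety, precisely because the quantity being bounded is an operator norm rather than a basis-dependent matrix entry.
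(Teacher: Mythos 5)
Your proof is correct, but it takes a genuinely different route from the paper's. The paper constructs the subnormalized vector directly and classically: it takes spectral decompositions $\rho=\sum_i r_i\proj{e_i}$ and $\sigma=\sum_i s_i\proj{f_i}$ with eigenvalues sorted in decreasing order, invokes Mirsky's inequality $\|(r_i)-(s_i)\|_1\leq\|\rho-\sigma\|_1$, and sets $\ket{\phi}=\sum_i\sqrt{\min\{r_i,s_i\}}\,\ket{e_i}^{A_1}\ket{f_i}^{A_2}$ --- the quantum transcription of the minimal classical coupling used to prove Lemma~\ref{lemma:fannes-audenaert} --- so that $\braket{\phi}{\phi}=\sum_i\min\{r_i,s_i\}\geq 1-\epsilon$ and the dominations $\phi^{A_1}\leq\rho$, $\phi^{A_2}\leq\sigma$ are immediate (note there is no transpose issue, since each marginal is diagonal in its own eigenbasis); it then assembles $\omega=\proj{\phi}+\epsilon\Delta_1\ox\Delta_2$ exactly as in your first paragraph. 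You instead re-enter the construction of Proposition~\ref{prop:quantum-coupling} and sharpen it: your computation $\braket{\vartheta}{\vartheta}=1-\frac{\epsilon}{1+\epsilon}\tr\bigl[\omega_0^{-1/2}\rho\,\omega_0^{-1/2}\Delta\bigr]\geq 1-\epsilon$ is correct (with the harmless caveat that $\omega_0^{-1/2}$ is the pseudo-inverse, and $\tr\rho\,\Pi_{\omega_0}=1$ because $\rho\leq(1+\epsilon)\omega_0$), and it genuinely improves on quoting that proposition as a black box, which --- as the paper itself remarks just before the statement --- only yields the squared bound $(1-\epsilon)^2$; your transpose repair via a local unitary $\1\ox V$ is also sound, since $\sigma$ and $\sigma^T$ are isospectral and conjugation by $\1\ox V$ preserves both the $A_1$ marginal and $\|\cdot\|_\infty$. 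Comparing the two: the paper's route is shorter, avoids the transpose bookkeeping altogether, works verbatim on separable infinite-dimensional spaces (your route inherits the formal status of the unnormalized $\ket{\Phi}$, cf.~the remark following Proposition~\ref{prop:quantum-coupling}), and actually delivers the slightly stronger bound $\|\omega\|_\infty\geq 1-\frac12\|(r_i)-(s_i)\|_1$ in terms of sorted spectra; what your route buys is the nontrivial observation that the coupling vector $\ket{\vartheta}$ of Proposition~\ref{prop:quantum-coupling} already has squared norm $\geq 1-\epsilon$, not merely overlap $\geq 1-\epsilon$ with $\ket{\psi}$, a tightening of that proposition relevant to the optimal-coupling question raised in the paper's conclusions.
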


\medskip
\begin{proof}
Choose spectral decompositions 
\begin{align*}
  \rho   &= \sum_i r_i \proj{e_i}, \\
  \sigma &= \sum_i s_i \proj{f_i},
\end{align*}
of the two states, with $r_1 \geq r_2 \geq \ldots$ and $s_1 \geq s_2 \geq \ldots$;
then, the $\ell^1$-distance between the probability vectors
$(r_i)$ and $(s_i)$ is not larger than the trace distance between
$\rho$ and $\sigma$:
\[
  \|\rho-\sigma\|_1 \geq \|(r_i)-(s_i)\|_1 =: 2\epsilon.
\]
(This is known as Mirksy's inequality~\cite[Cor.~7.4.9.3]{HornJohnson}.)

Defining a vector
\[
  \ket{\phi} := \sum_i \sqrt{\min\{r_i,s_i\}} \ket{e_i}^{A_1}\ket{f_i}^{A_2}
\]
in $A_1A_2$, we clearly have $\tr\proj{\phi} = 1-\epsilon$,
and $\phi^{A_1} \leq \rho$, $\phi^{A_2} \leq \sigma$, thus we can
write
\[
  \rho = \proj{\phi}^{A_1} + \epsilon\Delta_1,
  \quad
  \sigma = \proj{\phi}^{A_2} + \epsilon\Delta_2,
\]
with bona fide states $\Delta_1$ and $\Delta_2$. 

It is straightforward to check that the definition
$\omega := \proj{\phi} + \epsilon\Delta_1\ox\Delta_2$
satisfies all requirements on $\omega$.
%Furthermore, the states
%\begin{align*}
%  \ket{\varphi} &:= \sum_i \sqrt{r_i} \ket{e_i}^{A_1}\ket{f_i}^{A_2}, \\
%  \ket{\psi}      &:= \sum_i \sqrt{s_i} \ket{e_i}^{A_1}\ket{f_i}^{A_2}, 
%\end{align*}
%satisfy the fidelity bounds:
%\[\begin{split}
%  F(\varphi,\omega)^2  =      \tr\varphi\omega 
%                                   &\geq \tr\varphi\phi \\
%                                   &=      \left( \sum_i \sqrt{r_i}\sqrt{\min\{r_i,s_i\}} \right)^2 \\
%                                   &\geq \left( \sum_i \min\{r_i,s_i\} \right)^2 = (1-\epsilon)^2,
%\end{split}\]
%and similarly for $\psi$.
\end{proof}

\begin{lemma}
  \label{lemma:easy-S}
  Let the Hamiltonian $H$ on $A$ satisfying the Gibbs Hypothesis. Then
  for any two states $\rho$ and $\sigma$ on $A$ with
  $\tr \rho H,\,\tr \sigma H \leq E$ and
  $\frac12 \|\rho-\sigma\|_1 \leq \epsilon \leq 1$, 
  \[
    \bigl| S(\rho)-S(\sigma) \bigr| \leq 2\epsilon S\bigl( \gamma(E/\epsilon) \bigr) + h(\epsilon).
  \]
\end{lemma}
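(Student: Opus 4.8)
The plan is to mimic the proof of Lemma~\ref{lemma:fannes-audenaert}, but to replace the role of the Hilbert space dimension $d$ by the Gibbs entropy $S\bigl(\gamma(E/\epsilon)\bigr)$. Recall that in the finite-dimensional argument one couples the two states so that their ``disagreement'' has weight $\epsilon$, and then controls the entropy of the conditional/difference part by $\log d$. Here the difference part lives in an infinite-dimensional space, so $\log d$ is unavailable; instead I would bound the entropy of any state by the Gibbs entropy at its energy, via the maximum-entropy characterization in the Gibbs Hypothesis.

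The first step is to set up the coupling. Using Proposition~\ref{prop:almost-purification}, I would produce a state $\omega$ on $A_1A_2$ with marginals $\omega^{A_1}=\rho$, $\omega^{A_2}=\sigma$ and $\|\omega\|_\infty \geq 1-\epsilon$. The largest eigenvalue being at least $1-\epsilon$ lets me split $\omega$ as $\omega = (1-\epsilon)\proj{\phi} + \epsilon\, \tau$ for a pure state $\proj{\phi}$ and some state $\tau$, where $\phi^{A_1}$ and $\phi^{A_2}$ are close to $\rho$ and $\sigma$ respectively. The key point is that $\phi$ is pure, so $S(\phi^{A_1}) = S(\phi^{A_2})$; this is the quantum substitute for the classical diagonal coupling $\Pr\{X=Y=x\}=\min(p_x,q_x)$, and it is what will make the two entropies cancel to leading order.

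The second, and I expect \emph{main}, step is the energy bookkeeping. Writing $\rho$ and $\sigma$ as convex combinations with weights $(1-\epsilon,\epsilon)$ of the ``agreeing'' pure part and a ``disagreeing'' part $\Delta_1,\Delta_2$, concavity/additivity of entropy (Eq.~(\ref{eq:concavity-bounds})) gives $|S(\rho)-S(\sigma)|$ controlled by $\epsilon$ times the entropies of the disagreeing states plus an $h(\epsilon)$ term from the mixing. The disagreeing states $\Delta_1,\Delta_2$ are honest states but can have large energy; however, since $\rho=(1-\epsilon)\phi^{A_1}+\epsilon\Delta_1$ with $\tr\rho H\leq E$ and $H\geq 0$, I get $\epsilon\,\tr\Delta_1 H \leq E$, i.e.~$\tr\Delta_1 H \leq E/\epsilon$. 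The maximum-entropy property then yields $S(\Delta_1)\leq S\bigl(\gamma(E/\epsilon)\bigr)$, and likewise for $\Delta_2$. The hard part is assembling the inequalities so that the two $\epsilon\,S(\Delta_i)$ contributions combine into a single factor $2\epsilon\,S\bigl(\gamma(E/\epsilon)\bigr)$ while the $h(\epsilon)$ from the binary mixing survives cleanly; Corollary~\ref{cor:bound} may be needed to legitimize replacing the true energy bound $E/\epsilon$ by the monotone envelope when $\epsilon$ is small, guaranteeing the right-hand side is well-behaved.

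Finally I would collect terms: the pure-part entropies cancel up to the trace-distance error absorbed in $\delta=\epsilon$, the mixing contributes $h(\epsilon)$, and the two disagreeing parts contribute $2\epsilon\,S\bigl(\gamma(E/\epsilon)\bigr)$, giving the stated bound $\bigl|S(\rho)-S(\sigma)\bigr| \leq 2\epsilon\,S\bigl(\gamma(E/\epsilon)\bigr) + h(\epsilon)$. The anticipated obstacle is ensuring that each disagreeing state genuinely satisfies the energy constraint $\tr\Delta_i H\leq E/\epsilon$ simultaneously with the coupling structure from Proposition~\ref{prop:almost-purification}; if the coupling does not directly hand me states with controlled energy, I would instead reason directly on $\rho$ and $\sigma$ via their own decompositions against the shared pure part, using $H\geq 0$ to push all the energy into the weight-$\epsilon$ remainder.
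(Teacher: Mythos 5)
Your proposal is correct, and its decisive step differs genuinely from the paper's. Both arguments start from the same coupling (Proposition~\ref{prop:almost-purification}) and the split $\omega = (1-\epsilon)\proj{\phi} + \epsilon\,\tau$, but the paper then bounds $|S(\rho)-S(\sigma)| = |S(\omega^{A_1})-S(\omega^{A_2})| \leq S(\omega^{A_1A_2})$ via the Araki--Lieb triangle inequality, applies the concavity upper bound to get $S(\omega) \leq \epsilon S(\tau) + h(\epsilon)$, and does the energy bookkeeping on the \emph{doubled} system: $\omega$ has energy $\leq 2E$ with respect to $H\ox\1 + \1\ox H$, so $\tau$ has energy $\leq 2E/\epsilon$ and $S(\tau) \leq S\bigl(\gamma(E/\epsilon)^{\ox 2}\bigr) = 2S\bigl(\gamma(E/\epsilon)\bigr)$ --- that is where the factor $2$ comes from. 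You never invoke Araki--Lieb: you decompose each marginal against the shared pure part, $\rho = (1-\epsilon)\hat\phi^{A_1} + \epsilon\Delta_1$ and $\sigma = (1-\epsilon)\hat\phi^{A_2} + \epsilon\Delta_2$, apply both sides of Eq.~(\ref{eq:concavity-bounds}), and cancel the pure-part terms. Carried out carefully, this is not only valid but \emph{stronger} than the lemma: since $\hat\phi$ is pure, $S(\hat\phi^{A_1}) = S(\hat\phi^{A_2})$ holds \emph{exactly} by the Schmidt decomposition (there is no trace-distance error in the cancellation, contrary to what you write), so $S(\rho)-S(\sigma) \leq \epsilon\bigl(S(\Delta_1)-S(\Delta_2)\bigr) + h(\epsilon) \leq \epsilon\,S\bigl(\gamma(E/\epsilon)\bigr) + h(\epsilon)$, because the term $-\epsilon S(\Delta_2) \leq 0$ can simply be dropped; your ``$2\epsilon$'' double-counts the disagreeing parts, of which only one survives in each direction. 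Two small repairs make your argument airtight: (i) to subtract the pure-part entropies you need them finite, which follows from $(1-\epsilon)\tr\hat\phi^{A_1}H \leq \tr\rho H \leq E$ together with the Gibbs Hypothesis; (ii) when the pure part has weight $1-\epsilon_0 > 1-\epsilon$, pad it down to weight exactly $1-\epsilon$ (possible since $\phi^{A_1} \leq \rho$ and $\phi^{A_2} \leq \sigma$), so that both $h(\epsilon)$ and the energy bound $\tr\Delta_i H \leq E/\epsilon$ come out with the given $\epsilon$ --- Corollary~\ref{cor:bound}, which you cite, handles monotonicity of the linear term but not of $h$ beyond $\epsilon = \frac12$. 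In sum, the paper's route is shorter and symmetric in $\rho$ and $\sigma$; yours trades Araki--Lieb for the Schmidt symmetry of the coupling's pure part and, as a bonus, improves the constant from $2\epsilon$ to $\epsilon$.
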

\begin{proof}
Pick a state $\omega$ on $A_1A_2$, according to Proposition~\ref{prop:almost-purification}:
$\omega^{A_1}=\rho$, $\omega^{A_2} = \sigma$, and with largest
eigenvalue $\geq 1-\epsilon$, meaning that we can write
\[
  \omega = (1-\epsilon)\proj{\psi} + \epsilon\omega',
\]
with a pure state $\ket{\psi}$ (the normalized vector $\ket{\phi}$
from the proof of Proposition~\ref{prop:almost-purification})
and some other state $\omega'$. Hence,
\[\begin{split}
  \bigl| S(\rho)-S(\sigma) \bigr| &=    \bigl| S(\omega^{A_1})-S(\omega^{A_2}) \bigr| \\
                                  &\leq S(\omega^{A_1A_2})                          \\
                                  &\leq \epsilon S(\omega') + h(\epsilon)           \\
                                  &\leq 2\epsilon S\bigl( \gamma(E/\epsilon) \bigr) + h(\epsilon).
\end{split}\]
Here, we have first used the marginals of $\omega$, then in the second line
the Araki-Lieb ``triangle'' inequality~\cite{ArakiLieb}, 
in the third line strong subadditivity,
and in the last step the maximum entropy principle, noting that 
with respect to the Hamiltonian $H^{A_1}\ox\1^{A_2} + \1^{A_1}\ox H^{A_2}$, 
$\omega$ has energy $\leq 2E$, and so the energy of $\omega'$ is bounded
by $2E/\epsilon$. For the last line, observe that the Gibbs state at energy 
$2E/\epsilon$ of the composite system is $\gamma(E/\epsilon)^{\ox 2}$. 
\end{proof}

\medskip
The following two general bounds lack perhaps the simple elegance of
Lemma~\ref{lemma:easy-S}, but they turn out to be more flexible,
and stronger in certain regimes.

\begin{meta-lemma}[Entropy]
  \label{lemma:meta-S}
  For a Hamiltonian $H$ on $A$ satisfying the Gibbs Hypothesis and any
  two states $\rho$ and $\sigma$ with
  $\tr \rho H,\,\tr \sigma H \leq E$,
  $\frac12 \|\rho-\sigma\|_1 \leq \epsilon < \epsilon' \leq 1$, 
  and $\delta = \frac{\epsilon'-\epsilon}{1+\epsilon'}$,
  \[
    \bigl| S(\rho) \!-\! S(\sigma) \bigr| 
                 \leq (\epsilon'+2\delta) S\bigl( \gamma(E/\delta) \bigr) + h(\epsilon') + h(\delta).
  \]
\end{meta-lemma}

\begin{meta-lemma}[Conditional entropy]
  \label{lemma:meta-S-cond}
  For states $\rho$ and $\sigma$ on the bipartite system $A\ox B$
  and otherwise the same assumption as before,
  \[\begin{split}
    \bigl| S(A|B)_\rho \!-\! S(A|B)_\sigma \bigr| 
                 &\leq (2\epsilon'+4\delta) S\bigl( \gamma(E/\delta) \bigr) \\
                 &\phantom{==}
                       + (1+\epsilon')\,h\!\left(\!\frac{\epsilon'}{1+\epsilon'}\!\right) + 2 h(\delta).
  \end{split}\]
\end{meta-lemma}

To interpret these bounds, we remark that in a certain sense they show
that the Gibbs entropy at the cutoff energy $E/\epsilon$ ($E/\delta$) 
takes on the role of the logarithm of the dimension in the finite dimensional case.
Before we launch into their proof, let us introduce 
some notation: Define the energy cutoff projectors
\[
  P_\leq := \sum_{0\leq E_n \leq E/\delta} \proj{n},
  \quad
  P_> := \1-P_\leq,
\]
where $\ket{n}$ is the eigenvector of eigenvalue $E_n$ of the Hamiltonian $H$. 
We shall also consider the pinching map
\[
  \mathcal{T}(\xi) = P_\leq \xi P_\leq + P_> \xi P_>,
\]
which is a unital channel, as well as its action on the original
$\rho$ and $\sigma$:
\begin{align*}
  \mathcal{T}(\rho)   &=: (1-\lambda)\rho_\leq + \lambda\rho_>, \\
  \mathcal{T}(\sigma) &=: (1-\mu)\sigma_\leq   + \mu\sigma_>.
\end{align*}
Note that because $H$ commutes with the action of $\mathcal{T}$,
we have $\tr\xi H = \tr\mathcal{T}(\xi)H$, and so the energy
bound $E$ applies also to $\mathcal{T}(\rho)$ and $\mathcal{T}(\sigma)$.
Hence,
\begin{equation}
  \label{eq:cutoff-bounds}
  \lambda \leq \delta,\ \lambda \tr \rho_> H \leq E,
  \quad
  \mu \leq \delta,\ \mu \tr \sigma_> H \leq E.
\end{equation}
Our strategy will be to relate $S(\rho)$ to $S(\rho_\leq)$ 
(and the same for $\sigma$ and $\sigma_\leq$) via entropy inequalities, 
including concavity, similar to the first part of the paper,
and then apply the usual Fannes (Alicki-Fannes) inequalities to 
$\rho_\leq$ and $\sigma_\leq$.

\medskip
\begin{proofthm}{of Lemma~\ref{lemma:meta-S}}
First of all, by concavity of the entropy (monotonicity under unital
cptp maps),
\begin{equation}\begin{split}
  \label{eq:rho-pre-upper}
  S(\rho) &\leq S(\mathcal{T}(\rho))                                      \\
          &=    h(\lambda) + (1-\lambda) S(\rho_\leq) + \lambda S(\rho_>).
\end{split}\end{equation}
Now, by Eq.~(\ref{eq:cutoff-bounds}), the maximum entropy principle
and Corollary~\ref{cor:bound},
\[
  \lambda S(\rho_>) \leq \lambda S\bigl( \gamma(E/\lambda) \bigr) 
                    \leq \delta S\bigl( \gamma(E/\delta) \bigr).
\]
Thus, from Eq.~(\ref{eq:rho-pre-upper}), observing $\delta \leq \frac12$, we get
\begin{equation}
  \label{eq:rho-upper}
  S(\rho) \leq S(\rho_\leq) + h(\delta) + \delta S\bigl( \gamma(E/\delta) \bigr),
\end{equation}
and likewise for $\sigma$.

Second, we have 
\begin{equation}
  \label{eq:sigma-pre-lower}
  S(\sigma) \geq (1-\mu) S(\sigma_\leq) + \mu S(\sigma_>).
\end{equation}
To see this, we think of the action of $\mathcal{T}$ as a binary measurement
on the system $A$, which we can implement coherently
with two ancilla qubits $X$ and $X'$,
\[
  \ket{\varphi} \longmapsto (P_\leq\ket{\varphi})^A\ket{00}^{XX'} + (P_>\ket{\varphi})^A\ket{11}^{XX'}.
\]
Applying this to $\sigma$, we have by unitary invariance and
the Araki-Lieb ``triangle'' inequality,
\[\begin{split}
  S(\sigma) = S(AXX') &\geq S(AX)-S(X') \\
                      &=    S(AX)-S(X)  \\
                      &=    S(A|X)      
                       =    (1-\mu) S(\sigma_\leq) + \mu S(\sigma_>).
\end{split}\]
Thus, using that the energy of $\sigma_\leq$ is at most $E/\delta$ by construction,
and so $S(\sigma_\leq) \leq S\bigl( \gamma(E/\delta) \bigr)$,
\begin{equation}
  \label{eq:sigma-lower}
  S(\sigma) \geq (1-\mu) S(\sigma_\leq) 
            \geq S(\sigma_\leq) - \delta S\bigl( \gamma(E/\delta) \bigr).
\end{equation}

Third, by definitions, contractivity of the trace norm and triangle inequality,
\[\begin{split}
  2\epsilon &\geq \|\rho-\sigma\|_1 \\
            &\geq \bigl\| P_\leq\rho P_\leq - P_\leq\sigma P_\leq \bigr\|_1    \\
            &=    \bigl\| (1-\lambda)\rho_\leq - (1-\mu)\sigma_\leq \bigr\|_1  \\
            &=    \bigl\| (1-\delta)(\rho_\leq - \sigma_\leq)
                           + (\delta-\lambda)\rho_\leq + (\mu-\delta)\sigma_\leq \bigr\|_1 \\
            &\geq (1-\delta) \|\rho_\leq-\sigma_\leq\|_1 - 2\delta,
\end{split}\]
and so
\begin{equation}
  \label{eq:trace-norm-truncated}
  \frac12 \|\rho_\leq-\sigma_\leq\|_1 \leq \frac{\epsilon+\delta}{1-\delta} = \epsilon'.
\end{equation}
Hence by the Fannes inequality in the form of Lemma~\ref{lemma:fannes-audenaert},
\begin{equation}\begin{split}
  \label{eq:easy-part}
  |S(\rho_\leq)-S(\sigma_\leq)| &\leq \epsilon' \log\tr P_{\leq} + h(\epsilon') \\
                                &\leq \epsilon' S\bigl( \gamma(E/\delta) \bigr) + h(\epsilon').
\end{split}\end{equation}
The latter inequality holds because the state $\frac{1}{\tr P_\leq} P_\leq$ clearly
has energy bounded by $E/\delta$, and so cannot have entropy larger than the 
Gibbs state.

With these three elements we can conclude the proof: 
W.l.o.g.~$S(\rho) \geq S(\sigma)$, and so from Eqs.~(\ref{eq:rho-upper}),
(\ref{eq:sigma-lower}) and (\ref{eq:easy-part}),
\[\begin{split}
  S(\rho)-S(\sigma) &\leq S(\rho_\leq) - S(\sigma_\leq)
                           + h(\delta) + 2\delta S\bigl( \gamma(E/\delta) \bigr) \\
                    &\leq (\epsilon' + 2\delta) S\bigl( \gamma(E/\delta) \bigr) + h(\epsilon') + h(\delta),
\end{split}\]
as advertised.
\end{proofthm}

\medskip
\begin{proofthm}{of Lemma~\ref{lemma:meta-S-cond}}
It is very similar to the previous one, only that we have to be a bit more 
careful in some details, as the conditional entropy can be negative.

The first step goes through almost unchanged, with the map
$\mathcal{T}\ox\id_B$, since the conditional entropy is concave as well
(equivalent to strong subadditivity)~\cite{SSA}:
\begin{equation*}\begin{split}
%  \label{eq:rho-pre-upper}
  S(A|B)_\rho &\leq S(A|B)_{\mathcal{T}(\rho)}                                     \\
              &=    h(\lambda) + (1-\lambda) S(A|B)_{\rho_\leq} + \lambda S(A|B)_{\rho_>}.
\end{split}\end{equation*}
The remainder term $\lambda S(A|B)_{\rho_>}$ is upper bounded by $\lambda S(\rho_>^A)$
(again by strong subadditivity), hence the upper bound $\lambda S\bigl( \gamma(E/\lambda) \bigr)$
still applies. The only change is due to the fact that the conditional
entropy can be negative. However, for any bipartite state $\xi^{AB}$,
\[
  -S(\xi^A) \leq S(A|B)_\xi \leq S(\xi^A).
\]
Here, the right hand inequality is strong subadditivity that we have used before;
introducing a purification $\ket{\varphi}^{ABC}$ of the state, we have
$-S(A|B)_\varphi = S(A|C)_\varphi \leq S(\xi^A)$, which is the left
hand inequality.
Thus, 
\[
  (1-\lambda)S(A|B)_{\rho_\leq} \leq S(A|B)_{\rho_\leq} + \delta S\bigl( \gamma(E/\delta) \bigr).
\]
Altogether,
\begin{equation}
  \label{eq:rho-cond-upper}
  S(A|B)_\rho \leq S(A|B)_{\rho_\leq} + 2\delta S\bigl( \gamma(E/\delta) \bigr) + h(\delta) .
\end{equation}

Also the second step requires only minor modifications:
With the notation of the previous proof, and using the Araki-Lieb 
``triangle'' inequality once again,
\[\begin{split}
  S(AX & X'|B) =    S(AXX'B)-S(B) \\
              &\geq S(ABX)-S(X')-S(B) \\
              &=    S(ABX)-S(BX)-S(X)-S(B)+S(XB) \\ 
              &=    S(A|BX) - I(X:B) \\
              &\geq S(A|BX) - h(\delta).
\end{split}\]
Again, since conditional entropies can be negative, we have to
be more careful with remainder terms and get
\begin{equation}
  \label{eq:sigma-cond-lower}
  S(A|B)_\sigma \geq S(A|B)_{\sigma_\leq} - 2\delta S\bigl( \gamma(E/\delta) \bigr) - h(\delta).
\end{equation}

In the third step, the trace norm estimate (\ref{eq:trace-norm-truncated}) 
goes through unchanged, and then we apply the Alicki-Fannes inequality in the
form of Lemma~\ref{lemma:alicki-fannes-new}:
\[\begin{split}
  \bigl| S(A|B)_{\rho_\leq} &- S(A|B)_{\sigma_\leq} \bigr| \\
                            &\leq 2\epsilon' \log\tr P_{\leq}
                                   + (1+\epsilon')\,h\left(\!\frac{\epsilon'}{1+\epsilon'}\!\right) \\
                            &\leq 2\epsilon' S\bigl( \gamma(E/\delta) \bigr) 
                                   + (1+\epsilon')\,h\left(\!\frac{\epsilon'}{1+\epsilon'}\!\right).
\end{split}\]

Putting this together with Eqs.~(\ref{eq:rho-cond-upper}) and (\ref{eq:sigma-cond-lower}), 
assuming w.l.o.g.~that $S(A|B)_\rho \geq S(A|B)_\sigma$, we obtain
\[\begin{split}
  S(A|B)_\rho-S(A|B)_\sigma &\leq S(A|B)_{\rho_\leq} - S(A|B)_{\sigma_\leq}  \\
                            &\phantom{==}
                                    + 2 h(\delta) + 4\delta S\bigl( \gamma(E/\delta) \bigr) \\
                            &\leq (2\epsilon' + 4\delta) S\bigl( \gamma(E/\delta) \bigr)               \\
                            &\phantom{==}
                                    + (1+\epsilon')\,h\left(\!\frac{\epsilon'}{1+\epsilon'}\!\right) 
                                    + 2 h(\delta),
\end{split}\]
and we are done.
\end{proofthm}

\medskip
The bounds of Lemmas~\ref{lemma:easy-S}, \ref{lemma:meta-S} and \ref{lemma:meta-S-cond}
are very general, and it may not be immediately apparent how useful they are.
However, thanks to~\cite[Prop.~1(ii)]{Shirokov:Entropy}, restated in Remark~\ref{rem:SE-vs-E},
$\delta S\bigl( \gamma(E/\delta) \bigr) \rightarrow 0$ for every finite $E$,
as $\delta \rightarrow 0$ (cf.~\cite[Cor. ~4]{Shirokov:squashed}).
Thus, choosing $\epsilon'=\sqrt{\epsilon}$, the lemmas do prove continuity 
of the entropy and conditional entropy in general, 
and uniformly for each fixed energy.

We now specialize our bounds to the important case of a collection of $\ell$ 
quantum harmonic oscillators, where we shall see that the bounds are 
asymptotically tight. The Hamiltonian is
\begin{equation}
  \label{eq:harmonic}
  H = \sum_{i=1}^\ell \hbar\omega_i \, a_i^\dagger a_i,
\end{equation}
where $\omega_i$ is the native frequency of the $i$-th oscillator
and $a_i$ is its annihilation (aka lowering) operator (see e.g.~\cite{KokLovett}
or \cite{Weedbrook-rev}). 
Note that we chose the slightly unusual energy convention such that the ground state
has energy $0$, rather than $\sum_i \frac12 \hbar\omega_i$, to be able to
apply directly our above results. 
In the case of a single mode, and choosing units such that $\hbar\omega_1 = 1$,
the Hamiltonian simply becomes the number operator $N$.
In that case, it is well-known that 
\[\begin{split}
  S\bigl(\gamma(N)\bigr) = g(N) &:=   (N+1)\log(N+1) - N\log N \\
                                                  &\leq \log(N+1) + \log e.
\end{split}\]
Crucially, and in accordance with Proposition~\ref{prop:concavity}, $g$ is a 
concave, monotone increasing function of $N$.

In the general case of Eq.~(\ref{eq:harmonic}), 
$\gamma(E) = \bigotimes_{i=1}^\ell \gamma_i(E_i)$, with $E=\sum_i E_i$
and where $\gamma_i(E_i)$ is the Gibbs state of the $i$-th mode with
energy $E_i$. Maximizing the entropy,
\[
  S\left( \bigotimes_{i=1}^\ell \gamma_i(E_i) \right) 
             = \sum_{i=1}^\ell \, g\!\left(\frac{E_i}{\hbar\omega_i}\right),
\]
over all allocations of the total energy over the $\ell$ modes
leads to a transcendental equation, but we do not need to solve it as we
only want an upper bound, via $g(N) \leq \log(N+1)+\log e$. 
By a straightforward Lagrange multiplier calculation we see that the 
optimum is to divide the energy equally among the modes:
\begin{equation}\begin{split}
  \label{eq:harmonic-entropy}
  S\bigl( \gamma(E) \bigr) 
         &\leq \max \sum_{i=1}^\ell \left[ \log\left(\frac{E_i}{\hbar\omega_i}+1\right) + \log e \right] \\
         &=    (\log e)\ell + \sum_{i=1}^\ell \log\left(\frac{\overline{E}}{\hbar\omega_i}+1\right),
\end{split}\end{equation}
with $E =: \ell\overline{E}$.

By using this upper bound in Lemmas~\ref{lemma:meta-S} and \ref{lemma:meta-S-cond},
for $\delta = \alpha\epsilon(1-\epsilon)$, with a parameter $\alpha$ between $0$ and $\frac12$,
and introducing
\[
  \widetilde{h}(x) := \begin{cases}
                        h(x) & \text{ for } x \leq \frac12, \\
                        1    & \text{ for } x \geq \frac12,
                      \end{cases}
\]
we obtain directly the following:
\begin{lemma}
  \label{lemma:l-osc}
  Consider two states $\rho$ and $\sigma$ of the $\ell$-oscillator system
  (\ref{eq:harmonic}), whose energies are bounded $\tr\rho H,\ \tr\sigma H \leq E = \ell\overline{E}$.
  Then, $\frac12 \|\rho-\sigma\|_1 \leq \epsilon < 1$ implies
  \[\begin{split}
    \bigl| S&(\rho) - S(\sigma) \bigr| \\
           &\leq \epsilon \left(\!\frac{1+\alpha}{1-\alpha}\!+\!2\alpha\!\right) \!
                        \left[ \sum_{i=1}^\ell \log\left(\frac{\overline{E}}{\hbar\omega_i}+1\right) 
                                                      + \ell \log\frac{e}{\alpha(1-\epsilon)} \right]   \\
           &\phantom{==========} 
               + (\ell+2)\left(\!\frac{1+\alpha}{1-\alpha}\!+\!2\alpha\!\right)
                         \widetilde{h}\left(\frac{1+\alpha}{1-\alpha}\epsilon\right)\!. 
  \end{split}\]
  If the states live on a system composed of the $\ell$ oscillators ($A$)
  and another system $B$, then
  \[\begin{split}
    \bigl| S&(A|B)_\rho - S(A|B)_\sigma \bigr| \\
            &\leq 2\epsilon \left(\!\frac{1+\alpha}{1-\alpha}\!+\!2\alpha\!\right) \!
                        \left[ \sum_{i=1}^\ell \log\left(\frac{\overline{E}}{\hbar\omega_i}+1\right) 
                                                      + \ell \log\frac{e}{\alpha(1-\epsilon)} \right]   \\
           &\phantom{=========} 
               + (2\ell+4)\left(\!\frac{1+\alpha}{1-\alpha}\!+\!2\alpha\!\right)
                         \widetilde{h}\left(\frac{1+\alpha}{1-\alpha}\epsilon\right)\!.
             \ \Box
  \end{split}\]
\end{lemma}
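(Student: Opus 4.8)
The plan is to specialize the two Meta-Lemmas by a single substitution, $\delta = \alpha\epsilon(1-\epsilon)$, and then repackage the resulting expressions into the claimed clean form. First I would record the consequences of this choice for the auxiliary parameter $\epsilon' = \frac{\epsilon+\delta}{1-\delta}$ (from the defining relation $\delta = \frac{\epsilon'-\epsilon}{1+\epsilon'}$, cf.~Eq.~(\ref{eq:trace-norm-truncated})). A short computation --- amounting to checking that the quadratic $1+\alpha-2\alpha\epsilon+(1+\alpha)\epsilon^2$ is nonnegative on the relevant range --- shows $\epsilon' \leq \frac{1+\alpha}{1-\alpha}\epsilon$, and since $2\delta \leq 2\alpha\epsilon$ this yields $\epsilon'+2\delta \leq \epsilon\bigl(\frac{1+\alpha}{1-\alpha}+2\alpha\bigr)$ (and twice this for the conditional version). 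I would also note in passing that $\delta \leq \alpha/4 \leq \frac18$ and $\epsilon+2\delta \leq \epsilon(2-\epsilon) \leq 1$, so that the admissibility conditions $\delta \leq \frac12$ and $\epsilon < \epsilon' \leq 1$ of Meta-Lemmas~\ref{lemma:meta-S} and~\ref{lemma:meta-S-cond} are met.

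Next I would substitute the harmonic-oscillator estimate Eq.~(\ref{eq:harmonic-entropy}) for $S(\gamma(E/\delta))$, using that the per-mode mean energy at total energy $E/\delta = \ell\overline E/\delta$ is $\overline E/\delta$. The only genuine manipulation is to strip the factor $\frac1\delta$ out of each logarithm via the elementary bound $\frac{x}{\delta}+1 \leq \frac1\delta(x+1)$ (valid because $\delta \leq 1$), giving
\[
  S\bigl(\gamma(E/\delta)\bigr) \leq \sum_{i=1}^\ell \log\Bigl(\frac{\overline E}{\hbar\omega_i}+1\Bigr) + \ell\log\frac{e}{\delta},
\]
and then splitting $\log\frac{e}{\delta} = \log\frac{e}{\alpha(1-\epsilon)} + \log\frac1\epsilon$. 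Multiplying by the coefficient $\epsilon\bigl(\frac{1+\alpha}{1-\alpha}+2\alpha\bigr)$ reproduces exactly the bracketed term in the statement, at the cost of a leftover contribution proportional to $\ell\,\epsilon\log\frac1\epsilon$.

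The delicate bookkeeping, and the main reason the capped function $\widetilde h$ is introduced, is the final step: collecting this leftover together with the Meta-Lemmas' genuine entropy terms ($h(\epsilon')+h(\delta)$, resp.~$(1+\epsilon')h(\frac{\epsilon'}{1+\epsilon'})+2h(\delta)$) into copies of $\widetilde h(\frac{1+\alpha}{1-\alpha}\epsilon)$. Here I would use three elementary facts, uniformly in $\epsilon \in (0,1)$: that $\widetilde h$ is non-decreasing with $\widetilde h \geq h$; that every relevant argument ($\epsilon'$, $\delta$, $\frac{\epsilon'}{1+\epsilon'}$) is $\leq \frac{1+\alpha}{1-\alpha}\epsilon$; and the inequality $\epsilon\log\frac1\epsilon \leq \widetilde h(\frac{1+\alpha}{1-\alpha}\epsilon)$, which one checks separately for $\epsilon \leq \frac12$ (where the right side dominates $h(\epsilon) \geq \epsilon\log\frac1\epsilon$) and for $\epsilon > \frac12$ (where the right side equals $1$, larger than $\max_\epsilon \epsilon\log\frac1\epsilon$). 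Since $\frac{1+\alpha}{1-\alpha}+2\alpha \geq 1$, the two fixed $h$-terms contribute the constant $+2$ (resp.~$+4$) while the $\ell$ (resp.~$2\ell$) leftover copies contribute the $\ell$-dependent part, giving precisely the prefactors $(\ell+2)$ and $(2\ell+4)$. I expect this repackaging --- ensuring the capped entropy absorbs everything cleanly across the whole range of $\epsilon$, including $\epsilon$ near $1$ where $\frac{1+\alpha}{1-\alpha}\epsilon$ may exceed $1$ --- to be the only real subtlety; all remaining steps are direct substitutions into the two Meta-Lemmas.
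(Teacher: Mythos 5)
Your proposal is correct and takes essentially the same route as the paper, which states that Lemma~\ref{lemma:l-osc} follows ``directly'' from Meta-Lemmas~\ref{lemma:meta-S} and~\ref{lemma:meta-S-cond} with $\delta=\alpha\epsilon(1-\epsilon)$ and the bound of Eq.~(\ref{eq:harmonic-entropy}), leaving all the arithmetic unwritten. Your bookkeeping --- verifying $\epsilon'\leq\frac{1+\alpha}{1-\alpha}\epsilon$ via the positive quadratic, splitting $\log\frac{e}{\delta}=\log\frac{e}{\alpha(1-\epsilon)}+\log\frac{1}{\epsilon}$, and absorbing the leftover $\epsilon\log\frac{1}{\epsilon}$ and the $h$-terms into $(\ell+2)$ resp.\ $(2\ell+4)$ copies of $\widetilde{h}\bigl(\frac{1+\alpha}{1-\alpha}\epsilon\bigr)$ --- is exactly the omitted derivation, carried out correctly.
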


\medskip
\begin{remark}
  \normalfont
  For each fixed $\epsilon \leq 1$, we can make $\alpha$ arbitrarily small,
  and then for large energy $E \gg \sum_i \hbar\omega_i$, the bounds
  of Lemma~\ref{lemma:l-osc} are asymptotically tight, in the sense that apart from
  the additive offset terms, the factor multiplying $\epsilon$ ($2\epsilon$, resp.) 
  cannot be smaller than
  \[
    S\bigl( \gamma(E) \bigr) \approx \sum_{i=1}^\ell \log\left(\frac{\overline{E}}{\hbar\omega_i}+1\right).
  \]
  This can be seen in the entropy case by comparing the vacuum
  state $\rho = \proj{0}^{\ox\ell}$ of all $\ell$ modes with the state
  $\sigma = (1-\epsilon)\proj{0}^{\ox\ell} +\epsilon\gamma(E)$; in the
  conditional entropy case, take $\rho$ to be a purification of the Gibbs
  state $\gamma(E)$ on $A\ox B$, and 
  $\sigma = (1-\epsilon)\rho + \epsilon\gamma(E)^A\ox\tau^B$ with an
  arbitrary state $\tau$ on $B$.
\end{remark}

\section{Conclusions}
Using entropy inequalities, specifically concavity, we improved the
appearance of the Alicki-Fannes inequality for the conditional von Neumann
entropy to an almost tight form.
It would be nice to know the ultimately best form among all formulas that depend
only on the dimension of the Hilbert space and the trace distance,
but we have to leave this as an open problem.

In particular, it would be curious to find the optimal form of 
the fidelity in Proposition~\ref{prop:quantum-coupling},
\[\begin{split}
  \widetilde{F} &:= \max F(\psi,\Theta) 
                       \text{ s.t. } \Theta^{A_1}=\rho,\, \Theta^{A_2}=\psi^{A_2}\!, \\
                &\geq 1-\frac12\|\rho-\sigma\|_1,
\end{split}\]
with a fixed purification $\psi$ of $\sigma$,
and of Proposition~\ref{prop:almost-purification},
\[
  1-\frac12 \|\rho-\sigma\|_1 
     \leq \max \|\omega\|_\infty \text{ s.t. } \omega^{A_1}=\rho,\, \omega^{A_2}=\sigma,
\]
which may be regarded as quantum state analogues of the 
coupling random variables,
\[
  \frac12 \|p-q\|_1 = \min \Pr\{X\neq Y\} \text{ s.t. } X\sim p,\, Y\sim q.
\]
Furthermore, are there versions of these statements that would
allow for alternative proofs or tighter versions of 
Lemmas~\ref{lemma:alicki-fannes-new} and~\ref{lemma:meta-S-cond}
for the conditional entropy?

The same principle lead to the apparently first uniform continuity
bounds of the entropy and conditional on infinite dimensional Hilbert spaces under
a bound on the expected energy (or, for that matter, bounded expectation
of any sufficiently well-behaved Hermitian operator). In the case of 
a system of harmonic oscillators, we have seen that the bound is, in 
a certain sense, asymptotically tight, even though here we are much
farther away from a universally optimal form.

The Fannes and Alicki-Fannes inequalities already are known to
have many applications in quantum information theory. These include
the continuity of certain entanglement measures such as 
entanglement of formation~\cite{Nielsen-continuity}, 
relative entropy of entanglement~\cite{DonaldHorodecki},
squashed entanglement~\cite{E-sq} and 
conditional entanglement of mutual information~\cite{YHW}, 
and of various quantum channel capacities~\cite{LeungSmith}.
In fact, we always get explicit continuity bounds in terms of
the trace distance of the states or diamond norm distance of the
channels, respectively.
While in many applications it is of minor interest to have the optimal
form of the bound (for example when $\epsilon$ goes to $0$), 
it pays off to have a tighter bound than~\cite{AlickiFannes}
in the setting of approximately degradable channels~\cite{Sutter-et-al}. 
Indeed, this results even in new, tighter upper bounds on the quantum
capacity of very quiet depolarizing channels~\cite{Sutter-et-al}, 
by way of an extension of the methodology of~\cite{Q_ss}.
%Using our new bound and a treatment of approximately degradable
%channel capacity modelled on~\cite{DevetakShor}, the bounds of 
%Sutter \emph{et al.}~\cite{Sutter-et-al} on the regularized quantum and
%private capacity in terms of the single-letter coherent information
%can be improved significantly.

The infinite dimensional versions of these entropy bounds under
an energy constraint are awaiting applications, though it seems clear
that explicit bounds on the continuity and asymptotic continuity of entanglement 
measures~\cite{ESP}, (e.g.~for squashed entanglement since the
first posting of the present manuscript~\cite{Shirokov:squashed}) and channel 
capacities~\cite{Holevo:constrained,Holevo-e-constrained,HolevoShirokov:constrained,ShirokovHolevo}
in infinite dimension should be among the first, as well as the extension of 
approximate degradability~\cite{Sutter-et-al} to Bosonic 
channels~\cite{in-prep}.

\phantom{.}
\vspace{0.6cm}
{\bf Acknowledgments.}
Thanks to David Sutter and Volkher Scholz for stimulating discussions,
to Nihat Ay, Mil\'{a}n Mosonyi and Dong Yang for remarks on general relative 
entropy distances, to Maxim Shirokov for his many insights into entropy
and entanglement measures, in particular his keen interest in the
asymptotic continuity of entanglement cost and for spotting an error in 
an earlier version of the proof of Lemma~\ref{lemma:meta-S-cond}, and 
to Mark Wilde for comments on the history of Lemma~\ref{lemma:fannes-audenaert}.
The hospitality of the Banff International Research
Station (BIRS) during the workshop ``Beyond IID in Information Theory''
(5-10 July 2016) is gratefully acknowledged, 
where Volkher Scholz and David Sutter posed the derivation of
infinite dimensional Fannes type inequalities as an open problem,
and where the present work was initiated.

The author's work was supported by the EU (STREP ``RAQUEL''), 
the ERC (AdG ``IRQUAT''),
the Spanish MINECO (grant FIS2013-40627-P)
with the support of FEDER funds, as well as by
the Generalitat de Catalunya CIRIT, project~2014-SGR-966.


\begin{thebibliography}{99}
\bibitem{AlickiFannes} R. Alicki, M. Fannes, 
  ``Continuity of quantum conditional information'',
  \emph{J. Phys. A: Math. Gen.} {\bf 37}:L55-L57 (2004).

\bibitem{ArakiLieb} H. Araki, E.H. Lieb,
  ``Entropy inequalities'', 
  \emph{Commun. Math. Phys.} {\bf 18}:160-170 (1970).

\bibitem{Audenaert:Fannes} K.M.R. Audenaert,
  ``A sharp continuity estimate for the von Neumann entropy'',
  \emph{J. Math. Phys. A: Math. Theor.} {\bf 40}(28):8127-8136 (2007).
%  arXiv:quant-ph/0610146 (2006).

\bibitem{BDSW} C.H. Bennett, D.P. DiVincenzo, J.A. Smolin, W.K. Wootters,
  ``Mixed-state entanglement and quantum error correction'',
  \emph{Phys. Rev. A} {\bf 54}(5):3824-3851 (1996).

\bibitem{Wasserstein-survey} V.I. Bogachev, A.V. Kolesnikov,
  ``The Monge-Kantorovich problem: achievements, connections, and perspectives'',
  \emph{Russian Math. Surveys} {\bf 67}(5):785-890 (2012).

%\bibitem{BoydVandenberghe} S. Boyd and L. Vandenberghe,
%  \emph{Convex Optimization},
%  Cambridge University Press, Cambridge, 2004.

\bibitem{Christandl:PhD} M. Christandl,
  \emph{The Structure of Bipartite Quantum States --- Insights from Group Theory and Cryptography}, 
  PhD Thesis, Department of Applied Mathematics and Theoretical Physics,
  Cambridge University, 2006;
  arXiv:quant-ph/0604183.

\bibitem{E-sq} M. Christandl, A. Winter,
  ``\,`Squashed entanglement' -- An additive entanglement measure'',
  \emph{J. Math. Phys.} {\bf 45}(3):829-840 (2003).

\bibitem{antisymm} M. Christandl, N. Schuch, A. Winter,
  ``Entanglement of the Antisymmetric State'',
  \emph{Commun. Math. Phys.} {\bf 311}:397-422 (2012).

\bibitem{CoverThomas} T.M. Cover, J.A. Thomas,
  \emph{Elements of Information Theory},
  2nd ed., John Wiley \&{} Sons, 2006.

\bibitem{DevetakShor} I. Devetak, P.W. Shor,
  ``The capacity of a quantum channel for simultaneous transmission of 
  classical and quantum information'',
  \emph{Commun. Math. Phys.} {\bf 256}:287-303 (2005).

\bibitem{DonaldHorodecki} M.J. Donald, M. Horodecki,
  ``Continuity of relative entropy of entanglement'',
  \emph{Phys. Lett. A} {\bf 264}:257-260 (1999).

\bibitem{ESP} J. Eisert, C. Simon, M.B. Plenio,
  ``On the quantification of entanglement in infinite-dimensional quantum systems'',
  \emph{J. Phys. A: Math. Gen.} {\bf 35}(17):3911-3923 (2002).

\bibitem{Fannes} M. Fannes, 
  ``A continuity property of the entropy density for spin lattice systems'',
  \emph{Commun. Math. Phys.} {\bf 31}:291-294 (1973).

\bibitem{FvdG} C.A. Fuchs, J. van de Graaf, 
  ``Cryptographic Distinguishability Measures for Quantum-Mechanical States'',
  \emph{IEEE Trans. Inf. Theory} {\bf 45}(4):1216-1227 (1999).

\bibitem{Hastings} M.B. Hastings,
  ``Superadditivity of communication capacity using entangled inputs'',
  \emph{Nature Phys.} {\bf 5}:255-257 (2009).

\bibitem{HHT-EC} P.M. Hayden, M. Horodecki, B.M. Terhal,
  ``The asymptotic entanglement cost of preparing a quantum state'',
  \emph{J. Phys. A: Math. Gen.} {\bf 34}(35):6891-6898 (2001).
 
\bibitem{Holevo:constrained} A.S. Holevo,
  ``On Quantum Communication Channels with Constrained Inputs'',
  arXiv:quant-ph/9705054 (1997).

\bibitem{Holevo-e-constrained} A.S. Holevo,
  ``Entanglement-Assisted Capacities of Constrained Quantum Channels'',
  \emph{Theory Probab. Appl.} {\bf 48}(2):243-255 (2006).

\bibitem{HolevoShirokov:constrained} A.S. Holevo, M.E. Shirokov,
  ``Continuous Ensembles and the Capacity of Infinite-Dimensional Quantum Channels'',
  \emph{Theory Probab. Appl.} {\bf 50}(1):86-98 (2006).

\bibitem{HornJohnson} R. Horn, Ch. Johnson, 
  \emph{Matrix analysis}, Cambridge University Press, 
  2nd edition, 2013.

\bibitem{HHHO-lock} K. Horodecki, M. Horodecki, P. Horodecki, J. Oppenheim,
  ``Locking Entanglement with a Single Qubit'',
  \emph{Phys. Rev. Lett.} {\bf 94}:200501 (2005).

\bibitem{HughstonJozsaWootters} L.P. Hughston, R. Jozsa, W.K. Wootters,
  ``A complete classification of quantum ensembles having a given density matrix'',
  \emph{Phys. Lett. A} {\bf 183}:14-18 (1993).

\bibitem{KimRuskai} I. Kim, M.-B. Ruskai,
  ``Bounds on the concavity of entropy'',
  \emph{J. Math. Phys.} {\bf 55}:092201 (2014).

\bibitem{Kimura-et-al} G. Kimura, T. Miyadera, H. Imai,
  ``Optimal state discrimination in general probabilistic theories'',
  \emph{Phys. Rev. A} {\bf 79}:062306 (2009).

\bibitem{KokLovett} P. Kok, B.W. Lovett,
  \emph{Introduction to Optical Quantum Information Processing},
  Cambridge University Press, 2010.

\bibitem{LeungSmith} D. Leung, G. Smith,
  ``Continuity of quantum channel capacities'',
  \emph{Commun. Math. Phys.} {\bf 292}(1):201-215 (2009).

%\vfill\pagebreak

\bibitem{SSA} E.H. Lieb, M.-B. Ruskai,
  ``Proof of the strong subadditivity of quantum-mechanical entropy'',
  \emph{J. Math. Phys.} {\bf 14}(12):1938-1941 (1973).

\bibitem{MosonyiHiai} M. Mosonyi, F. Hiai,
  ``On the Quantum R\'{e}nyi Relative Entropies and Related Capacity Formulas'',
  \emph{IEEE Trans. Inf. Theory} {\bf 57}(4):2474-2487 (2011).

\bibitem{Nielsen-continuity}
  M.A. Nielsen, ``Continuity bounds for entanglement'',
  \emph{Phys. Rev. A} {\bf 61}:064301 (2000).

\bibitem{OhyaPetz} M. Ohya, D. Petz,
  \emph{Quantum Entropy and Its Use}, 2nd edition,
  Springer Verlag, Berlin Heidelberg, 2004.

\bibitem{Ouyang} Y. Ouyang,
  ``Channel covariance, twirling, contraction, and some upper bounds on the quantum capacity'',
  \emph{Quantum Inf. Comput.} {\bf 14}(11\&{}12):917-936 (2014);
  arXiv[quant-ph]:1106.2337v6.

\bibitem{Petz:book} D. Petz, 
  \emph{Quantum Information Theory and Quantum Statistics},
  Springer Verlag, Berlin Heidelberg, 2008.

\bibitem{extraweak} M. Piani, A. Winter, 
  ``Extra-weak additivity of the relative entropy of entanglement'',
  in eternal preparation.

\bibitem{Sason} I. Sason, 
  ``Entropy Bounds for Discrete Random Variables via Maximal Coupling'',
  \emph{IEEE Trans. Inf. Theory} {\bf 59}(11):7118-7131 (2013).

\bibitem{Schroedinger:steering} E. Schr\"odinger,
  ``Discussion of Probability Relations between Separated Systems'',
  \emph{Math. Proc. Cambridge Phil. Soc.} {\bf 31}(4):555-563 (1935);
  ``Probability Relations between Separated Systems'',
  \emph{Math. Proc. Cambridge Phil. Soc.} {\bf 32}(3):446-452 (1936).

\bibitem{ShirokovHolevo} M.E. Shirokov, A.S. Holevo,
  ``On Approximation of Infinite-Dimensional Quantum Channels'',
  \emph{Probl. Inf. Transm.} {\bf 44}(2):73-90 (2008).

\bibitem{Shirokov:Entropy} M.E. Shirokov,
  ``Entropy characteristics of subsets of states. I'',
  \emph{Izvestiya: Mathematics} {\bf 70}(6):1265-1292 (2006).

\bibitem{Shirokov:S-continuity} M.E. Shirkov,
  ``Continuity of the von Neumann entropy'',
  \emph{Commun. Math. Phys.} {\bf 296}(3):625-654 (2010).

\bibitem{Shirokov:I-continuity} M.E. Shirokov,
  ``Measures of quantum correlations in infinite-dimensional systems'',
  arXiv[quant-ph]:1506.06377 (2015).

\bibitem{Shirokov:squashed} M.E. Shirokov,
  ``Squashed entanglement in infinite dimensions'',
  arXiv[quant-ph]:1507.08964v3 (2015).

\bibitem{Q_ss} G. Smith, J.A. Smolin, A. Winter,
  ``The Quantum Capacity With Symmetric Side Channels'',
  \emph{IEEE Trans. Inf. Theory} {\bf 54}(9):4208-4217 (2008).

\bibitem{Smith-degr} G. Smith,
  ``Private classical capacity with a symmetric side channel and its 
  application to quantum cryptography'',
  \emph{Phys. Rev. A} {\bf 78}:022306 (2008).

\bibitem{Sutter-et-al}
  D. Sutter, V.B. Scholz, R. Renner, A. Winter,
  ``Approximate Degradable Quantum Channels'',
  arXiv[quant-ph]:1412.0980v2 (2015).
  
\bibitem{in-prep}
  D. Sutter, V.B. Scholz, R. Renner, A. Winter,
  in preparation.

\bibitem{Synak-RadtkeHorodecki} B. Synak-Radtke, M. Horodecki,
  ``On asymptotic continuity of functions of quantum states'',
  arXiv:quant-ph/0507126 (2005).

\bibitem{Umegaki} H. Umegaki, 
  ``Conditional expectations in an operator alegebra IV (entropy and information)'', 
  \emph{Kodai Math, Sem. Rep.} {\bf 14}:59-85 (1962).

\bibitem{relent} V. Vedral, M.B. Plenio, M.A. Rippin, P.L. Knight,
  ``Quantifying Entanglement'',
  \emph{Phys. Rev. Lett.} {\bf 78}(12):2275-2279 (1997).

\bibitem{VollbrechtWerner} K.G.H. Vollbrecht, R.F. Werner,
  ``Entanglement measures under symmetry'',
  \emph{Phys. Rev. A} {\bf 64}:062307 (2001).

\bibitem{Weedbrook-rev} C. Weedbrook, S. Pirandola, R. Garc\'{i}a-Patr\'{o}n, N.J. Cerf,
  T.C. Ralph, J.H. Shapiro, S. Lloyd,
  ``Gaussian quantum information'', 
  \emph{Rev. Mod. Phys.} {\bf 84}(2):621-669 (2012). 

\bibitem{Wehrl} A. Wehrl,
  ``General properties of entropy'',
  \emph{Rev. Mod. Phys.} {\bf 50}(2):221-260 (1978).

\bibitem{WeisKnauf} S. Weis, A. Knauf,
  ``Entropy distance: new quantum phenomena'',
  \emph{J. Math. Phys.} {\bf 53}:102206 (2012).

\bibitem{Wilde-book} M.M. Wilde,
  \emph{Quantum Information Theory},
  Cambridge University Press, 2013.

\bibitem{Winterize-or-die} A. Winter,
  ``\,`Extrinsic' and `Intrinsic' Data in Quantum Measurements: 
  Asymptotic Convex Decomposition of Positive Operator Valued Measures'',
  \emph{Commun. Math. Phys.} {\bf 244}:157-185 (2004).
  
\bibitem{YHW} D. Yang, M. Horodecki, Z.D. Wang,
  ``An Additive and Operational Entanglement Measure: 
  Conditional Entanglement of Mutual Information'',
  \emph{Phys. Rev. Lett.} {\bf 101}:140501 (2008);
  arXiv:quant-ph/0701149.

\bibitem{Zhang} Z. Zhang, 
  ``Estimating Mutual Information Via Kolmogorov Distance'',
  \emph{IEEE Trans. Inf. Theory} {\bf 53}(9):3280-3282 (2007).

\end{thebibliography}
\end{document}